\theoremstyle{plain}
\newtheorem{corollary}{Corollary}[subsection]
\newtheorem{lemma}{Lemma}
\newtheorem{proposition}{Proposition}[subsection]
\newtheorem{remark}{Remark}[subsection]
\newtheorem{theorem}{Theorem}[subsection]
\numberwithin{equation}{section}
\begin{document}
\title{A set of nonlinear coherent states for the pseudoharmonic oscillator}
\begin{footnotesize}
\email{$^{\star}$ahbli.khalid@gmail.com \\  $^{\flat}$hamidoukass@gmail.com\\
$^\ddagger$Kayupepatrick@gmail.com \\ $^{\Upsilon}$akouraich@yahoo.fr }
\end{footnotesize}
\maketitle
\begin{center}
\author{K. Ahbli$^{\star}$, H. Kassogue$^{\flat}$, P.K. Kayupe$^\ddagger$ \ and
\ A. Kouraich$^{\Upsilon}$\bigskip }
\end{center}
\begin{center}
\begin{scriptsize}
$^{\star}$ Faculty of Sciences of Agadir, Ibn Zohr University, Morocco\\
$^{\flat}$ FST of Tangier, Abdelmalek Essaadi University, Morocco \\
$^\ddagger$  13, Market avenue, Mongt-Ngafula, Kinshasa, DR Congo \\
$^{\Upsilon}$ FST of B\'eni Mellal, Sultan Moulay Slimane University, Morocco \vspace*{0.2mm}\vspace*{0.2mm}
\end{scriptsize}
\end{center}

\begin{abstract} 
	We construct two-parameters family of nonlinear coherent states by replacing the factorial in coefficients $z^n/\sqrt{n!}$ of the canonical coherent states by a specific generalized factorial $x_n^{\gamma,\sigma}!$ where parameters $\gamma$ and $\sigma$ satisfy some conditions for which the normalization condition and the resolution of identity are verified. The obtained family is a generalization of the Barut-Girardello coherent states and those of the philophase states. In the particular case of parameters $\gamma$ and $\sigma$, we attache these states to the pseudo-harmonic oscillator depending on two parameters $\alpha,\beta> 0$. The obtained nonlinear coherent states are superposition of eigenstates of this oscillator. The associated Bargmann-type transform is defined and we derive some results.
\end{abstract}

{\small Keywords: Nonlinear coherent states, Pseudo-harmonic oscillator, Bargmann-type transform.}
\section{Introduction}
Coherent states (CSs) were first discovered by Schr\"{o}dinger in 1926 \cite{Schrodinger} as wavepackets having dynamics similar to that of a classical particle submitted to a quadratic potential. They have arised from the study of the quantum harmonic oscillator to become very useful in different areas of physics. Nonlinear coherent states (NLCSs), which can be classified as an algebraic generalization of the canonical CSs for the harmonic oscillator, were implicitly defined by Shanta \textit{et al} \cite{Shanta} in a compact form and introduced explicitly by de Matos Filho and Vogel \cite{Matos} and Man'ko \textit{et al} \cite{Manko}. This notion attracted much attention in recent decades, especially because of their nonclassical properties in quantum optics \cite{RT}.

In this paper, we construct a two-parameters family of NLCSs, denoted $|z,\gamma ,\sigma \rangle $, by replacing the factorial $n!$ in coefficients $%
z^{n}/\sqrt{n!}$ of the canonical CSs by a specific generalized
factorial $x_{n}^{\gamma,\sigma }!:=x_{1}^{\gamma,\sigma }\cdots x_{n}^{\gamma,\sigma }$ with $x_{0}^{\gamma,\sigma }=0$, where $x_{n}^{\gamma,\sigma}$ is a sequence of positive numbers (given by \eqref{3.1} see below) and $(\gamma,\sigma)\in S_1\cup S_2$ where $S_1$ and $S_2$ are some subsets of $\mathbb{R}^2$ obtained from restrictions imposed by normalization conditions and the resolution of identity relation which must be satisfied by these states in an arbitrary Hilbert space $\mathcal{H}$ of quantum states (often termed Fock space). The obtained NLCSs are a generalization of the Barut Girardello CSs type \cite{Barut} obtained when $2\gamma =1,2,3,...$ and $\sigma=0$ and those of the philophase states \cite{Brif} occuring when $\gamma=1/2$ and $\sigma$ being a positive integer. The case $\gamma=1/2$ and $\sigma=0$, have been considered in \cite{KKM} where the authors associated to their NLCSs two set of orthogonal polynomials by following the work of T. Ali and M. Ismail \cite{AI}. Next, we consider the model of the pseudoharmonic oscillator (PHO) $\Delta_{\alpha ,\beta}$ \cite{Godman}  acting on the Hilbert space of square integrable functions on the positive real half-line $L^{2}(\mathbb{R}_{+},d\xi)$. Then, we construct the NLCSs attached to $\Delta_{\alpha ,\beta}$ by choosing $\mathcal{H}= L^{2}(\mathbb{R}_{+},d\xi)$ and the basis vectors as the eigenfunctions of this oscillator which constitutes a complete orthonormal basis of $L^{2}(\mathbb{R}_{+},d\xi)$. The wavefunctions of constructed NLCSs are obtained in the special case $\gamma =2^{-1}\mu(\alpha)$. Recently, B. Mojaveri \textit{et al} \cite{Mojaveri2018} have introduced and studied three kinds of NLCSs associated with the para-Bose oscillator which is a particular case of the PHO obtained (up to scale factor$=\frac{1}{2}$) for $\alpha=(p-1)(p-3)/4$ and $\beta =1$. Finally, we exploit the obtained result to define a new Bargmann-type integral transform and we derive some interesting results.

The rest of the paper is organized as follows. In Section 2, we summarize the construction of NLCSs. In Section 3, we particularize the formalism of NLCSs for the sequences $x_n^{\gamma,\sigma}$ and we discuss the corresponding resolution of the identity. The  $x_n^{\gamma,\sigma}$-NLCSs and Bargmann-type transform attached to PHO are defined in Section 4. Section 5 is devoted to the conclusion.

\section{ Nonlinear coherent states formalism}
This section devoted to a quick review on the construction of NLCSs. Details and proofs of statements may be found in \cite[pp.146-151]{AAG}. The principal idea is to involve a new sequence of positive numbers in the superposition coefficients. More precisely, let us first recall the
series expansion definition of the canonical CSs, which first
was due to Iwata \cite{Iwata}:
\begin{equation}
|z\rangle =(e^{z\bar{z}})^{-1/2}\sum\limits_{n=0}^{\infty }\frac{\bar{z}^{n}}{\sqrt{n!}}|\psi_{n}\rangle ,\quad z\in \mathbb{C}.  \label{2.1}
\end{equation}
The kets $|\psi_{n}\rangle ,\,\,n=0,1,2,...,\infty$ are an
orthonormal basis in an arbitrary (complex, separable, infinite dimensional)
Hilbert space $\mathcal{H}$. The related NLCSs are constructed as follows.
 
 Let $\{x_{n}\}_{n=0}^{\infty }$, be an infinite sequence of positive
	numbers with $\lim_{n\rightarrow +\infty }x_{n}=R^{2}$ where $R>0$ could be
	finite or infinite, but not zero. We define the generalized factorial by $%
	x_{n}!=x_{1}x_{2}\cdots x_{n}$ and $x_{0}!=1$. For each $z\in \mathcal{D}$ a complex domain, the NLCSs constituting a generalization of \eqref{2.1} are defined by
	\begin{equation}
	|z\rangle =(\mathcal{N}(z\bar{z}))^{-1/2}\sum\limits_{n=0}^{\infty }\frac{%
		\bar{z}^{n}}{\sqrt{x_{n}!}}|\psi _{n}\rangle ,\quad z\in \mathcal{D}
	\label{2.2}
	\end{equation}
	where the normalization factor 
	\begin{equation}
	\mathcal{N}(z\bar{z})=\sum\limits_{n=0}^{\infty }\frac{|z|^{2n}}{x_{n}!}
	\label{2.3}
	\end{equation}
	is chosen so that the vectors \eqref{2.2} are
	normalized to one and are well
	defined for all $z$ for which the sum \eqref{2.3} converges, i.e. $%
	\mathcal{D}=\{z\in \mathbb{C},|z|<R\}$.
We assume that there
exists a measure $d\nu $ on $\mathcal{D}$ ensuring the following resolution of the
identity
\begin{equation}
\int_{\mathcal{D}}|z\rangle \langle z|d\nu (z,\bar{z})=1_{\mathcal{H}}
\label{2.4}.
\end{equation}
\\
Setting $d\nu (z,\bar{z})=\mathcal{N}(z\bar{z})d\eta (z,\bar{z})$, it
is easily seen that in order for \eqref{2.4} to be satisfied, the measure $%
d\eta $ should be of the form $d\eta (z,\bar{z})=(1/2\pi)d\theta d\lambda (\rho ),\, z=\rho
e^{i\theta } $ where the measure $d\lambda $ solves the moment problem
\begin{equation}
\int_{0}^{R}\rho ^{2n}d\lambda (\rho )=x_{n}!,\quad n=0,1,2,...\, .
\label{2.6}
\end{equation}
In most of the practical situations, the support of the measure $d\eta $ is the whole domain $\mathcal{D%
}$, i.e., $d\lambda $ is supported on the entire interval $[0,R)$.\\
To illustrate this formalism, we consider the sequence of positive numbers
\begin{equation}
x^{\gamma}_{n}=n\left( 2\gamma +n-1\right) ,\ \ \ n=0,1,2,3,... \, ,  \label{2.7}
\end{equation}
with $2\gamma =1,2,3,...$, being a fixed parameter. Here $R=\infty$ and the moment problem is
\begin{equation}
\int_{0}^{\infty }\rho ^{2n}d\lambda (\rho )=n!(2\gamma )_{n}
\end{equation}
where $(a)_{n}=a(a+1)\cdots(a+n-1)$ with $(a)_{0}=1$, is the shifted factorial. 
The solution of this problem is
\begin{equation}
d\lambda (\rho )=\frac{2}{\pi }K_{2\gamma -1}(2\rho )\rho ^{2-2\gamma }d\rho
,\quad 0\leq \rho <\infty ,
\end{equation}
where
\begin{equation}
K_{\sigma}(x)=\frac{1}{2}\left( \frac{x}{2}\right) ^{\sigma }\int_{0}^{\infty
}\exp \left( -t-\frac{x^{2}}{4t}\right) \frac{dt}{t^{\sigma +1}},\ \ \Re (x)>0, \label{Mcd}
\end{equation}
is the Macdonald function of order $\sigma$ \cite[p.183]{Watson}. The associated coherent states are of Barut-Girardello type \cite{Barut}:
\begin{equation}
\label{BG}
|z,\gamma \rangle =\frac{|z|^{\frac{2\gamma -1}{2}}}{\sqrt{I_{2\gamma -1}(2|z|)}}%
\sum\limits_{n=0}^{\infty }\frac{\bar{z}^{n}}{\sqrt{n!(2\gamma )_{n}}}%
|\psi _{n}\rangle ,\quad z\in \mathbb{C},
\end{equation}
where $I_{\sigma}(.)$ being the modified Bessel function of the first kind and of
order $\sigma $ \cite[p.172]{Watson}.

\section{A set of NLCSs with the sequences $x_n^{\gamma,\sigma}$}
In this section, we define a new set of NLCSs attached to the sequences $x_n^{\gamma,\sigma}$, without specifying the Hilbert space $\mathcal{H}$ and the basic vectors $\left|\psi_n\right\rangle$, for which we will discuss some general properties. 
\subsection{NLCSs attached to a sequence $x_n^{\gamma,\sigma}$}\ \\
Here, we will be dealing with two-parameters family of NLCSs on the complexe plane, which generalizes the set of CSs of Barut-Girardello  \cite{Barut} type and those of the philophase states \cite{Brif} without specifying the Hamiltonian. Precisely, let us consider $\gamma>0$ and $\sigma\in\mathbb{R}\backslash\mathbb{Z}^{\ast}_{-}$, two fixed parameters and let us define the infinite sequence of positive numbers:
\begin{equation}
\label{3.1}
\quad x_0^{\gamma,\sigma}=0, \quad \text{and}\quad  x_n^{\gamma,\sigma}=\frac{(n+\sigma)^2(n+2\gamma-1)}{n},\ \ n=1,2,3,...
\end{equation}
The corresponding factorial of \eqref{3.1} reads as
\begin{equation}
\label{1.2}
x^{\gamma,\sigma}_n!=\frac{(\sigma+1)_n^2(2\gamma)_n}{n!},\ \ n=0,1,2,3,...,
\end{equation} 
where $(a)_n=a(a+1)\cdots (a+n-1)$ is the Pochhamer symbol.\\
Now, we define a set of $x_n^{\gamma,\sigma}$-NLCSs through the sequence $x_n^{\gamma,\sigma}$, under some conditions on the parameters $\gamma$ and $\sigma$, via the superposition 
\begin{equation}
\label{1.3}
\left|z,\gamma,\sigma\right\rangle=\left(\mathcal{N}_{\gamma,\sigma}\left(z\bar{z}\right)\right)^{-1/2}\sum_{n=0}^{\infty}\frac{\bar{z}^n}{\sqrt{x_n^{\gamma,\sigma}}!}\left|\psi_n\right\rangle
\end{equation}
where $\mathcal{N}_{\gamma,\sigma}(.)$ is the normalization factor and $\{\left|\psi_n\right\rangle\}$ is an orthonormal basis of an arbitrary (separable, infinite dimensional) Hilbert space $\mathcal{H}$. These $x_n^{\gamma,\sigma}$-NLCS are defined for all $z\in\mathbb{C}$, since $\lim\limits_{n\rightarrow +\infty}x_n^{\gamma,\sigma}=+\infty$.
Through equation \eqref{1.2},  the $x_n^{\gamma,\sigma}$-NLCSs \eqref{1.3} reads as
\begin{equation}
\label{1.4}
\left|z,\gamma,\sigma\right\rangle=\left(\mathcal{N}_{\gamma,\sigma}\left(z\bar{z}\right)\right)^{-1/2}\sum_{n=0}^{\infty}\sqrt{\frac{n!}{(2\gamma)_n}}\frac{\bar{z}^n}{(\sigma+1)_n}\left|\psi_n\right\rangle.
\end{equation} 

\begin{proposition}
Let $\gamma>0$ and $\sigma\in\mathbb{R}\backslash\mathbb{Z}^{\ast}_{-}$, be fixed parameters. Then, the normalization factor in \eqref{1.4} for the set of $x_n^{\gamma,\sigma}$-NLCSs reads
\begin{equation}
\label{1.6}
\mathcal{N}_{\gamma,\sigma}\left(z\bar{z}\right)={}_{2}F_{3}\left( 
\begin{array}{c}
1,1 \\ 
2\gamma,\sigma+1,\sigma+1%
\end{array}%
\Bigg\vert z\bar{z}\right)
\end{equation}
in terms of ${}_{2}F_{3}$-hypergeometric series. We have two interesting particular cases: 
\begin{itemize}
\item when $\gamma=1/2$ and $\sigma\in\mathbb{N}^{\ast}$, \eqref{1.6} reduces  to 
\begin{equation}
\label{3.7}
\mathcal{N}_{\frac{1}{2},\sigma}\left(z\bar{z}\right)=\left[I_{0}(2|z|)-\sum_{n=0}^{\sigma-1}\frac{|z|^{2n}}{(n!)^2}\right]\frac{\Gamma^2(\sigma+1)}{|z|^{2\sigma}}
\end{equation}
\item when $\sigma=0$, \eqref{1.6} reads 
\begin{equation}
\label{3.6}
\mathcal{N}_{\gamma,0}\left(z\bar{z}\right)=\Gamma(2\gamma)|z|^{1-2\gamma}I_{2\gamma-1}(2|z|),
\end{equation} 
for all $z\in\mathbb{C}$, where $I_{\tau}(.)$ is the bessel function of the first kind and of order $\tau$.
\end{itemize}
\end{proposition}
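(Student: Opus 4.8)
The plan is to compute the normalization factor directly from its series definition \eqref{2.3} specialized to the sequence $x_n^{\gamma,\sigma}$, and then to read off the two special cases by substituting the corresponding parameter values. First I would insert the closed form \eqref{1.2} of the generalized factorial into $\mathcal{N}_{\gamma,\sigma}(z\bar z)=\sum_{n\ge 0}|z|^{2n}/x_n^{\gamma,\sigma}!$, which yields
\[
\mathcal{N}_{\gamma,\sigma}(z\bar z)=\sum_{n=0}^{\infty}\frac{n!}{(2\gamma)_n(\sigma+1)_n^2}\,|z|^{2n}.
\]
Using the elementary identity $(1)_n=n!$, I can rewrite the summand so as to recognize the right-hand side as the generalized hypergeometric series $\sum_n\frac{(1)_n(1)_n}{(2\gamma)_n(\sigma+1)_n(\sigma+1)_n}\frac{(z\bar z)^n}{n!}$, which is exactly the ${}_2F_3$ in \eqref{1.6}. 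This step is purely formal; convergence for all $z\in\mathbb{C}$ is guaranteed since $\lim_n x_n^{\gamma,\sigma}=+\infty$.

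Next, for the case $\gamma=1/2$ and $\sigma\in\mathbb{N}^{\ast}$, I would set $2\gamma=1$ so that $(2\gamma)_n=n!$ and the series collapses to $\sum_{n\ge 0}|z|^{2n}/(\sigma+1)_n^2$. Writing $(\sigma+1)_n=\Gamma(n+\sigma+1)/\Gamma(\sigma+1)=(n+\sigma)!/\sigma!$ and performing the index shift $m=n+\sigma$ turns this into $\frac{\Gamma^2(\sigma+1)}{|z|^{2\sigma}}\sum_{m\ge\sigma}|z|^{2m}/(m!)^2$. Splitting $\sum_{m\ge\sigma}=\sum_{m\ge 0}-\sum_{m=0}^{\sigma-1}$ and recognizing $\sum_{m\ge 0}|z|^{2m}/(m!)^2=I_0(2|z|)$ from the series expansion of the modified Bessel function produces \eqref{3.7}; the hypothesis $\sigma\in\mathbb{N}^{\ast}$ is exactly what makes the finite subtracted sum $\sum_{n=0}^{\sigma-1}$ meaningful.

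Finally, for $\sigma=0$ I would use $(\sigma+1)_n=(1)_n=n!$ to reduce the series to $\sum_{n\ge 0}|z|^{2n}/(n!\,(2\gamma)_n)$. Replacing $(2\gamma)_n=\Gamma(n+2\gamma)/\Gamma(2\gamma)$ and comparing with the defining series $I_\nu(x)=\sum_{n\ge 0}\frac{1}{n!\,\Gamma(n+\nu+1)}(x/2)^{2n+\nu}$ at $\nu=2\gamma-1$ and argument $x=2|z|$ yields \eqref{3.6} after factoring out $|z|^{1-2\gamma}$. I expect no genuine obstacle here: the only points requiring care are the index shift in the $\gamma=1/2$ case (which relies on $\sigma$ being a positive integer) and keeping track of the gamma-factor prefactors $\Gamma(2\gamma)$ and $\Gamma^2(\sigma+1)$ that arise when passing between Pochhammer symbols and the standard Bessel normalizations.
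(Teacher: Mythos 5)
Your proposal is correct and follows essentially the same route as the paper: identifying the series $\sum_n |z|^{2n}/x_n^{\gamma,\sigma}!$ as the ${}_2F_3$, then performing the index shift $m=n+\sigma$ against the $I_0$ series for $\gamma=1/2$, $\sigma\in\mathbb{N}^{\ast}$, and matching the $\sigma=0$ case to the series (equivalently the ${}_0F_1$ form) of $I_{2\gamma-1}$. The only cosmetic difference is that the paper first writes the general inner product $\langle z,\gamma,\sigma|w,\gamma,\sigma\rangle$ and sets $z=w$, whereas you compute $\mathcal{N}_{\gamma,\sigma}$ directly; the computations are otherwise identical.
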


\begin{proof}  The inner product of two $x_n^{\gamma,\sigma}$-NLCSs expressed in \eqref{1.4} is given by 
\begin{eqnarray}
\label{1.7}
\left\langle z,\gamma,\sigma|w,\gamma,\sigma\right\rangle
&=& \left(\mathcal{N}_{\gamma,\sigma}\left(z\bar{z}\right)\mathcal{N}_{\gamma,\sigma}\left(w\overline{w}\right)\right)^{-1/2}\ _{2}F _{3}\left( 
\begin{array}{c}
1,1 \\ 
2\gamma,\sigma+1,\sigma+1%
\end{array}%
\Bigg\vert z\bar{w}\right).
\end{eqnarray}
If $\sigma+1$ is zero or a negative integer, the function $\ _{2}F _{3}$ is not defined, since all but a finite number of the terms of the series become infinite. Then, $\sigma\neq-1,-2,-3,...$ have to be satisfied. The normalization is deduced from \eqref{1.7} by taking $z = w$ such that $\left\langle z,\gamma,\sigma|z,\gamma,\sigma\right\rangle = 1$.
Now, we take $\gamma=1/2$ and $\sigma \in \mathbb{N}$ in \eqref{1.6} then 
\begin{eqnarray}
\label{3.111}
\mathcal{N}_{\frac{1}{2},\sigma}(z\bar{z})&=&{}_{1}F_{2}\left( 
\begin{array}{c}
1 \\ 
\sigma+1,\sigma+1%
\end{array}%
\Bigg\vert z\bar{z}\right)\\
&=&\frac{\Gamma^2(\sigma+1)}{|z|^{2\sigma}}\sum_{n\geq\sigma}\frac{|z|^{2n}}{\left(n!\right)^2}\\
&=& \left[I_{0}(2|z|)-\sum_{n=0}^{\sigma-1}\frac{|z|^{2n}}{(n!)^2}\right]\frac{\Gamma^2(\sigma+1)}{|z|^{2\sigma}}.
\end{eqnarray}
For $\sigma=0$, Eq.\eqref{1.6} becomes 
\begin{eqnarray*}
\label{1.8}
\mathcal{N}_{\gamma,0}(z\bar{z})=\ _{0}F _{1}\left( 
\begin{array}{c}
	- \\ 
	2\gamma%
\end{array}%
\Bigg\vert z\bar{z}\right)
\end{eqnarray*}
where the confluent hypergeometric function $_{0}F_{1}$ can  be written \cite[p.44]{srivastava}:
\begin{equation}
I_{\nu}(\xi)=\frac{\left(\frac{1}{2}\xi\right)^{\nu}}{\Gamma(\nu+1)}\ _{0}F _{1}\left( 
\begin{array}{c}
- \\ 
\nu+1%
\end{array}%
\Bigg\vert \frac{\xi^2}{4}\right).
\end{equation}
For parameters $\nu=2\gamma-1$ and $\xi=2|z|$, we obtain \eqref{3.6}.
\end{proof}
We see that the NLCSs \eqref{1.4} are over-complete, and do not form an orthonormal set. Two  vectors $|w,\gamma,\sigma\rangle$ and $|z,\gamma,\sigma\rangle$ are orthogonal if $\left\langle z,\gamma,\sigma|w,\gamma,\sigma\right\rangle = 0$  that means the entire function $\ _2F_3\left(1,1;2\gamma,\sigma+1,\sigma+1; z\bar{w}\right)$ has a zero at the point $z\bar{w}$. However, the entire function has no zeros if $z\bar{w}$ is a positive number.
\begin{remark}
The sequence $x_n^{\gamma,\sigma}$ can be identified in the litterature as a special case of sequences of positive numbers in the construction of the generalized hypergeometric CSs \cite{Appl, Popov}. However, the main reason for choosing the sequence $x_n^{\gamma,\sigma}$ resides in the fact that the corresponding NLCSs constructed in \eqref{1.3} generalize two specific states in the literature: 
\begin{itemize}
	\item Barut-Girardello coherent states \cite{Barut} while setting $2\gamma=1,2,3,...$ and $\sigma=0$;
	\item Philophase states \cite{Brif} while setting $\gamma=\frac{1}{2}$ and $\sigma\in \mathbb{N}^{\ast}$.
\end{itemize}
The generalized approach of such construction of NLCSs is the subject of a forthcoming paper.
\end{remark}
For all that follows, the general properties of $x_n^{\gamma,\sigma}$-NLCSs reduce to those of the above specific cases. 
\subsection{Resolution of identity}
Here, we will discuss the resolution of identity under some conditions on the parameters for which the $x_n^{\gamma,\sigma}$-NLCSs are well defined. The problem here is to find the measure $d\vartheta_{\gamma,\sigma}$ witch satisfy the following resolution of identity 
	\begin{equation}
	\label{3.11}
	\int_{\mathbb{C}}\left\vert z,\gamma,\sigma\right\rangle\ \left\langle z,\gamma,\sigma\right\vert d\vartheta
	_{\gamma,\sigma}(z)=1_{\mathcal{H}}.
	\end{equation}
To resolve this problem, we will write $d\vartheta_{\gamma,\sigma}$ as
\begin{equation}
\label{3.13}
d\vartheta _{\gamma,\sigma}(z)=\mathcal{N}_{\gamma,\sigma}\left(z\bar{z}\right)m_{\gamma,\sigma}\left(z\bar{z}\right)d\varrho(z),
\end{equation}
where $m_{\gamma,\sigma}$ is an auxiliary density function to be determined and $d\varrho $ is the Lebesgue measure on $\mathbb{C}$. By considering the polar coordinates $z=\rho e^{i\theta} ,\ \rho>0$ and $\theta\in [0,2\pi)$, the measure can be rewritten as 
\begin{equation}
d\vartheta _{\gamma,\sigma}(z)=\mathcal{N}_{\gamma,\sigma}\left(\rho ^{2}\right)m_{\gamma,\sigma}\left(\rho ^{2}\right)\frac{%
	\rho d\rho d\theta}{2\pi}.
\end{equation}
Using the expression \eqref{1.4} of $x_n^{\gamma,\sigma}$-NLCSs, the operator 
\begin{equation}
\mathcal{O}_{\gamma,\sigma}:=\int_{\mathbb{C}}\left| z,\gamma,\sigma\right\rangle
\left\langle z,\gamma,\sigma\right| d\vartheta_{\gamma,\sigma}(z)
\end{equation}
reads successively,
\begin{eqnarray}
\mathcal{O}_{\gamma,\sigma}&=&\sum\limits_{n,k=0}^{\infty} \sqrt{\frac{n!}{(2\gamma)_n}}\sqrt{\frac{k!}{(2\gamma)_k}}\frac{1}{(\sigma+1)_n(\sigma+1)_k}\nonumber\\
& & \times \left( \int_0^{\infty} 
\rho^{n+k}m_{\gamma,\sigma}\left(\rho^2\right)\left( \int_0^{2\pi} e^{i(k-n)\theta}\frac{d\theta}{2\pi} \right)\rho d\rho\right)
\vert \psi_n\rangle \langle \psi_k\vert \nonumber \\
&=&\sum\limits_{n=0}^{\infty}\frac{n!}{(2\gamma)_n}\frac{1}{(\sigma+1)^2_n}\left(
\int_0^{\infty}\rho^{2n}m_{\gamma,\sigma}(\rho^2)\rho d\rho \right) \vert \psi_n\rangle \langle
\psi_n\vert. 
\label{3.17}
\end{eqnarray}
We make the change of variable $r=\rho^2$, we obtain
\begin{equation}
	\mathcal{O}_{\gamma,\sigma} = \sum\limits_{n=0}^{\infty}\frac{n!}{2(2\gamma)_n}\frac{1}{(\sigma+1)^2_n}\left(
	\int_0^{\infty}r^{n}m_{\gamma,\sigma}(r)dr \right) \vert \psi_n\rangle \langle
	\psi_n\vert
\end{equation}
Thus, in order to recover the resolution of the identity \eqref{3.11}, we must have  
\begin{equation}  
\label{Meijer}
\int_0^{\infty}r^{n}m_{\gamma,\sigma}(r)dr=\frac{2(2\gamma)_n(\sigma+1)^2_n}{n!}.
\end{equation}
We start from the formula \cite[p.337]{Erdelyi1954}: 
\begin{equation}  \label{Meijerfirstintegral}
\int_{0}^{\infty}x^{s-1}G^{m l}_{p q} \left(x \  \Bigg\vert \  {a_1,\cdots,a_p\atop b_1,\cdots,b_q} \right)dx=\frac{%
	\prod\limits_{j=1}^{m}\Gamma(b_j+s)\prod\limits_{j=1}^{l}\Gamma(1-a_j-s)}{%
	\prod\limits_{j=l+1}^{p}\Gamma(a_j+s)\prod\limits_{j=m+1}^{q}\Gamma(1-b_j-s)}
\end{equation}
involving the Meijer's function $G_{pq}^{ml}$ with conditions $0\leq m\leq q$, $0\leq l\leq p$, $p+q<2(m+l)$, $-\min\limits_{1\leq j\leq m}\Re(b_j)<\Re(s)<1-\max\limits_{1\leq k\leq l}\Re(a_k)$. Then, for parameters $ x=r, \ m=3, \ l=0, \ p=1, \
q=3$, $a_1=0,\ b_1=2\gamma -1,\ b_2=b_3=\sigma $ and $%
s=n+1 $, equation \eqref{Meijerfirstintegral} reduces to 
\begin{equation}\label{Eq3.22}
\int_{0}^{\infty}r^{n}G^{3 0}_{1 3} \left(r \  \Bigg\vert \  {0\atop 2\gamma-1,\sigma,\sigma} \right)dr=\frac{\Gamma(n+2\gamma)\Gamma^2(n+\sigma+1)}{n!}
\end{equation}
that becomes
\begin{equation}  
\label{2.19}
\int_{0}^{\infty}\frac{2r^{n}}{\Gamma(2\gamma)\Gamma^2(\sigma+1)}G^{3 0}_{1 3} \left(r \  \Bigg\vert \  {0\atop 2\gamma-1,\sigma,\sigma} \right)dr=\frac{2(2\gamma)_n(\sigma+1)^2_n}{n!}.
\end{equation}
Comparing \eqref{2.19} to \eqref{Meijer} we obtain the searched weight function 
\begin{eqnarray}  
\label{weihtfunction1}
m_{\gamma,\sigma}(r)&=&\frac{2}{\Gamma(2\gamma)\Gamma^2(\sigma+1)}G^{3 0}_{1 3} \left(r \  \Bigg\vert \  {0\atop 2\gamma-1,\sigma,\sigma} \right),\ r>0.
\end{eqnarray}
We still need to prove the positiveness of $m_{\gamma,\sigma}(r)$, that remains to prove the positiveness of $G^{3 0}_{1 3}\left(r \big\vert{0\atop 2\gamma-1,\sigma,\sigma}\right)$. To accomplish this, we follow the work of Sixdeniers and Penson \cite{Sixdernier}. For this, we comeback to the equation \eqref{Eq3.22}  and we rewrite it by using the multiplication formula \cite[p.46]{srivastava}
\begin{equation}
y^{\nu}G^{m l}_{p q} \left(y \ \Bigg\vert \ {{(a_p)\atop(b_q)}}
\right)=G^{m l}_{p q} \left(y \ \Bigg\vert \ {{(a_p+\nu)\atop(b_q+\nu)}} \right),
\end{equation}
for $y=r$ and $\nu=1$ as
\begin{equation}
\int_{0}^{\infty}r^{n-1}G^{3 0}_{1 3} \left( r \  \Bigg\vert \  {1\atop 2\gamma,\sigma+1,\sigma+1} \right)dr=\frac{\Gamma(n+2\gamma)\Gamma^2(n+\sigma+1)}{n!}.
\end{equation}
This one is recognized as Mellin transform of $G^{3 0}_{1 3}\left(r\big\vert{1\atop 2\gamma,\sigma+1,\sigma+1} \right)$. Then, the inverse Mellin transform of $\frac{\Gamma(n+2\gamma)\Gamma^2(n+\sigma+1)}{n!}$ reads as 
\begin{equation}\label{Eq3.25}
G^{3 0}_{1 3} \left( r \  \Bigg\vert \  {1\atop 2\gamma,\sigma+1,\sigma+1} \right) = \frac{1}{2\pi i}\int_{-i\infty}^{+i\infty}\frac{\Gamma(n+2\gamma)\Gamma^2(n+\sigma+1)}{n!}r^{-n}dn.
\end{equation}
Now, according to the Mellin convolution property of inverse Mellin transform \cite{Marichev} also called the \textit{generalized Parseval formula}, if for arbitrary $f^{\ast}(n)$ and $g^{\ast}(n)$, there exists an inverse Mellin transform $f(r)$ and $g(r)$ respectively, then
\begin{equation}
\label{A.11}
\int_{0}^{\infty}f\left(\frac{r}{t}\right)g(t)\frac{1}{t}dt=\frac{1}{2\pi i}\int_{-i\infty}^{+i\infty}f^{\ast}(n)g^{\ast}(n)r^{-n}dn.
\end{equation}
We want to apply this property to equation \eqref{Eq3.25} in order to write the Meijer's function $G^{3 0}_{1 3}\left(r\big\vert{1\atop 2\gamma,\sigma+1,\sigma+1} \right)$ as an integrale representation of two positives functions $f(r)$ and $g(r)$. The choice of function $f^*(n)$ and $g^*(n)$ will yield restrictions on the parameters $\gamma$ and $\sigma$. There are four interesting cases that can be regrouped into the two following cases:
\begin{itemize}
\item Case 1:
\begin{equation}\label{case1}
f^{\ast}(n)=\frac{\Gamma(n+\textbf{a})\Gamma(n+\textbf{b})}{n!}, \qquad g^{\ast}(n)=\Gamma(n+\textbf{c})
\end{equation}
\item Case 2:
\begin{equation}\label{case2}
f^{\ast}(n)=\frac{\Gamma(n+\textbf{a})}{n!},\qquad	g^{\ast}(n)=\Gamma(n+\textbf{b})\Gamma(n+\textbf{c})
\end{equation}
\end{itemize} 
thus that the regrouped case 1, consists of the two cases $\textbf{a}=\textbf{b}=\sigma+1$, $\textbf{c}=2\gamma$ and $\textbf{a}=2\gamma$, $\textbf{b}=\textbf{c}=\sigma+1$ and the regrouped case 2 consists of the two cases $\textbf{a}=2\gamma$, $\textbf{b}=\textbf{c}=\sigma+1$ and $\textbf{a}=\textbf{c}=\sigma+1$, $\textbf{b}=2\gamma$.
\ \\ Note first that if one of parameters $\textbf{a,b}$ or $\textbf{c}$ is equal to $1$, then the meijer function $G^{3 0}_{1 3}\left(r\big\vert{1\atop \textbf{a},\textbf{b},\textbf{c}} \right)$ reduces to (for example $\textbf{a}=1$) \cite[p.61]{Mathai}:
\begin{equation}
G^{2 0}_{0 2}\left(r\big\vert{-\atop \textbf{b},\textbf{c}} \right) = 2r^{\frac{\textbf{b}+\textbf{c}}{2}} K_{\textbf{c}-\textbf{b}}(2\sqrt{r}),\ \ \ r>0
\end{equation}
that is positive (see \eqref{Mcd}). Otherwise, we have :
\ \\
\textbf{Case 1 discussion:} choosing $f^*(n)$ and $g^*(n)$ as given in \eqref{case1}
it will be identified respectively
\begin{equation}
f(r)=G^{2 0}_{1 2} \left( r \  \Bigg\vert \  {1\atop \textbf{a},\textbf{b}} \right), \quad \Re(\textbf{a},\textbf{b})>0
\end{equation}
and 
\begin{equation}
 g(r)=r^{\textbf{c}}e^{-r},\quad \Re(\textbf{c})>0.
\end{equation}
The function $g(r)$ is positive for $r\in[0,\infty)$. We have to prove the positiveness of $f(r)$. For this, we use the following lemma (see Appendix A for the proof):
\begin{lemma}
	An integral representation of the Meijer's function $G^{2 0}_{1 2}$ is given by
	\begin{equation}
	G^{2 0}_{1 2} \left( z \  \Bigg\vert \  {\alpha\atop \beta,\lambda} \right)=\frac{z^{\beta}e^{-z}}{\Gamma(\alpha-\lambda)} \int_{0}^{\infty}e^{-s z}s^{\alpha-\lambda-1}\left(1+s\right)^{\beta-\alpha}ds \label{IRM}
	\end{equation}
	for $|\arg z|<\frac{\pi}{2}$ and $\Re(\alpha-\lambda)>0$;
\end{lemma}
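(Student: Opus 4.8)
The plan is to prove \eqref{IRM} by matching Mellin transforms: the Mellin transform of a Meijer $G$-function is the explicit ratio of Gamma functions already recorded in \eqref{Meijerfirstintegral}, so it suffices to check that the right-hand side of \eqref{IRM} has the same Mellin transform and then to invoke injectivity of the transform. Specialising \eqref{Meijerfirstintegral} to $m=2$, $l=0$, $p=1$, $q=2$ with upper parameter $\alpha$ and lower parameters $\beta,\lambda$ (the admissibility conditions $p+q<2(m+l)$ and $l=0$ are met), the Mellin transform of the left-hand side in a variable $\tau$ is
\begin{equation*}
\int_0^{\infty} z^{\tau-1} G^{2 0}_{1 2}\left( z \  \Bigg\vert \  {\alpha\atop \beta,\lambda} \right) dz = \frac{\Gamma(\beta+\tau)\Gamma(\lambda+\tau)}{\Gamma(\alpha+\tau)},
\end{equation*}
holomorphic on the strip $\Re(\tau)>-\min(\Re\beta,\Re\lambda)$.

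Next I would compute the Mellin transform of the right-hand side of \eqref{IRM}. Writing the inner integration variable as $t$, inserting the representation and interchanging the order of the $z$- and $t$-integrations (legitimate by Fubini's theorem, since for $|\arg z|<\pi/2$, $\Re(\alpha-\lambda)>0$ and $\Re(\tau)$ large the double integrand is absolutely integrable), the $z$-integral is an elementary Gamma integral,
\begin{equation*}
\int_0^{\infty} z^{\tau+\beta-1}e^{-z(1+t)}\,dz = \frac{\Gamma(\tau+\beta)}{(1+t)^{\tau+\beta}},
\end{equation*}
which leaves
\begin{equation*}
\frac{\Gamma(\tau+\beta)}{\Gamma(\alpha-\lambda)}\int_0^{\infty} t^{\alpha-\lambda-1}(1+t)^{-\alpha-\tau}\,dt.
\end{equation*}

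The surviving $t$-integral is a Beta integral in the form $\int_0^{\infty} t^{p-1}(1+t)^{-(p+q)}\,dt=\Gamma(p)\Gamma(q)/\Gamma(p+q)$ with $p=\alpha-\lambda$ and $q=\lambda+\tau$; its convergence at $t=0$ is exactly where the hypothesis $\Re(\alpha-\lambda)>0$ enters, while convergence at $t=\infty$ holds on the same strip as above. Its value $\Gamma(\alpha-\lambda)\Gamma(\lambda+\tau)/\Gamma(\alpha+\tau)$ cancels the prefactor $\Gamma(\alpha-\lambda)$ and produces $\Gamma(\beta+\tau)\Gamma(\lambda+\tau)/\Gamma(\alpha+\tau)$, which coincides with the Mellin transform of the left-hand side. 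By uniqueness of the Mellin transform on the common strip of analyticity the two functions agree for $z>0$, and since both sides are holomorphic in $z$ on $|\arg z|<\pi/2$ the identity \eqref{IRM} extends there by analytic continuation.

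The only genuinely delicate point is bookkeeping the region of validity: one must exhibit a single vertical strip in $\tau$ on which \eqref{Meijerfirstintegral} applies, the Fubini interchange is justified, and the Beta integral converges, and then argue that equality on that strip forces equality of the functions on the whole sector. The purely algebraic steps, namely the Gamma and Beta evaluations, are routine. An essentially equivalent alternative would be to recognise the right-hand integral as $\Gamma(\alpha-\lambda)\,U(\alpha-\lambda,\beta-\lambda+1,z)$, the Tricomi confluent hypergeometric function, and to quote the standard reduction of $G^{2 0}_{1 2}$ to $U$; I would nonetheless keep the Mellin-transform argument as the main proof because it is self-contained and relies only on \eqref{Meijerfirstintegral}.
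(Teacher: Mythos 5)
Your proof is correct, but it follows a genuinely different route from the one in the paper. The paper's Appendix A proves \eqref{IRM} by first quoting the reduction of $G^{2 0}_{1 2}$ to a Whittaker function,
\begin{equation*}
G^{2 0}_{1 2} \left( z \  \Bigg\vert \  {\alpha\atop \beta,\lambda} \right)=z^{\frac{\beta+\lambda-1}{2}}e^{-\frac{z}{2}}W_{\frac{1+\beta+\lambda}{2}-\alpha,\frac{\beta-\lambda}{2}}(z),
\end{equation*}
and then substituting the classical integral representation of $W_{\nu,\mu}$ from Gradshteyn--Ryzhik, which carries exactly the hypotheses $|\arg z|<\pi/2$ and $\Re(\alpha-\lambda)>0$. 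You instead verify that the right-hand side of \eqref{IRM} has Mellin transform $\Gamma(\beta+\tau)\Gamma(\lambda+\tau)/\Gamma(\alpha+\tau)$ on the strip $\Re(\tau)>-\min(\Re\beta,\Re\lambda)$ --- via Fubini, a Gamma integral, and a Beta integral whose convergence at $t=0$ is precisely where $\Re(\alpha-\lambda)>0$ enters --- and then invoke uniqueness of the Mellin transform together with analytic continuation in $z$. Your computations check out: the exponent bookkeeping $(1+t)^{\beta-\alpha-\tau-\beta}=(1+t)^{-\alpha-\tau}$ and the identification $p=\alpha-\lambda$, $q=\lambda+\tau$ are right, and the Gamma ratio you obtain matches the specialization of \eqref{Meijerfirstintegral} to $m=2$, $l=0$, $p=1$, $q=2$. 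The trade-off is that your argument is self-contained relative to the paper (it reuses only \eqref{Meijerfirstintegral}, which is essentially the defining Mellin--Barnes data of the $G$-function, so there is no circularity), at the cost of having to justify the Fubini interchange and the Mellin inversion on a common strip; the paper's argument is shorter but rests on two external table identities. Either is acceptable; if you keep yours, do spell out the exponential decay of the Gamma ratio on vertical lines (via Stirling) so that the appeal to Mellin inversion is airtight.
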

\noindent for parameters $\alpha=1$, $\beta=\textbf{a},\,\lambda=\textbf{b}$ and $z=\frac{r}{t}$, that require  the condition $\Re(\textbf{b})<1$, to get
\begin{equation}
f(r)=\frac{\left(\frac{r}{t}\right)^{\textbf{a}}e^{-\frac{r}{t}}}{\Gamma(1-\textbf{b})} \int_{0}^{\infty}e^{-\frac{sr}{t}}s^{-\textbf{b}}\left(1+s\right)^{\textbf{a}-1}ds.
\end{equation}
Clearly, $f(r)$ is positive for $\Re(\textbf{b})<1$.\\
Applying the property \eqref{A.11}  with $f^*(n)$, $g^*(n)$, $f(r)$ and $g(r)$ to the equation \eqref{Eq3.25}, we obtain the following integral representation
\begin{equation}\label{Eq3.32}
	G^{3 0}_{1 3} \left( r \  \Bigg\vert \  {1\atop \textbf{c},\textbf{a},\textbf{b}} \right) = \int_{0}^{\infty}g(t)f\left(\frac{r}{t}\right)\frac{1}{t}dt ,\quad \Re(\textbf{a,b,c})>0
\end{equation}
that assure well the positiveness of $G^{3 0}_{1 3}\left(r\big\vert{1\atop \textbf{c},\textbf{a},\textbf{b}} \right)$ for $\Re(\textbf{a,c})>0$ and $0<\Re(\textbf{b})<1$.
\begin{corollary}
	An integral representation of the Meijer's function $G^{3 0}_{1 3}$ is given by
	\begin{equation}
	G^{3 0}_{1 3} \left( r \  \Bigg\vert \  {1\atop a, b, c} \right)  = \int_0^{\infty}t^{a-1}e^{-t}G^{2 0}_{1 2} \left( \frac{r}{t}\  \Bigg\vert \  {1\atop b,c} \right)dt \label{Eq3.32}
	\end{equation}
	for $\Re(a,b,c)>0$. 
\end{corollary}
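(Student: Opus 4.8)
The plan is to obtain the Corollary as the clean restatement, under the relabeling $\mathbf{c}=a$, $\mathbf{a}=b$, $\mathbf{b}=c$, of the integral representation established just above in the Case~1 discussion; the engine throughout is the generalized Parseval formula \eqref{A.11} applied to the Mellin--Barnes representation \eqref{Eq3.25}. First I would record that, for generic parameters, the Mellin transform of $G^{3 0}_{1 3}\left(r\,\big\vert\,{1\atop a,b,c}\right)$ equals $\Gamma(n+a)\Gamma(n+b)\Gamma(n+c)/\Gamma(n+1)$; this is the same computation already carried out for the special values $a=2\gamma$, $b=c=\sigma+1$, namely the multiplication formula followed by the master integral \eqref{Meijerfirstintegral} with $m=3,\ l=0,\ p=1,\ q=3$.

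Next I would factor this Mellin image as a product $g^{\ast}(n)f^{\ast}(n)$ with
\[
g^{\ast}(n)=\Gamma(n+a),\qquad f^{\ast}(n)=\frac{\Gamma(n+b)\Gamma(n+c)}{\Gamma(n+1)},
\]
and identify the two inverse Mellin transforms. The first is immediate, $\int_{0}^{\infty}t^{n-1}\,t^{a}e^{-t}\,dt=\Gamma(n+a)$, so $g(t)=t^{a}e^{-t}$, which is positive on $[0,\infty)$. The second is $f(r)=G^{2 0}_{1 2}\left(r\,\big\vert\,{1\atop b,c}\right)$, whose Mellin transform is $\Gamma(n+b)\Gamma(n+c)/\Gamma(n+1)$ by exactly the argument used for $G^{3 0}_{1 3}$, i.e. the multiplication formula together with \eqref{Meijerfirstintegral} now taken with $m=2,\ l=0,\ p=1,\ q=2$ and lower parameters $b-1,\,c-1$. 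Feeding $g,f,g^{\ast},f^{\ast}$ into \eqref{A.11} then yields
\[
G^{3 0}_{1 3}\left(r\,\big\vert\,{1\atop a,b,c}\right)=\int_{0}^{\infty}g(t)\,f\!\left(\tfrac{r}{t}\right)\frac{dt}{t}=\int_{0}^{\infty}t^{a-1}e^{-t}\,G^{2 0}_{1 2}\!\left(\tfrac{r}{t}\,\big\vert\,{1\atop b,c}\right)dt,
\]
which is the asserted representation.

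The delicate point, and the one I expect to be the main obstacle, is the stated parameter range $\Re(a,b,c)>0$. The Case~1 discussion invoked the Lemma to display $f=G^{2 0}_{1 2}$ in a manifestly positive form, and that step forced the extra restriction $\Re(\mathbf{b})<1$ (i.e. $\Re(c)<1$ after relabeling). For the identity itself, however, positivity is irrelevant: the Parseval formula \eqref{A.11} requires only that the fundamental strips of $f^{\ast}$ and $g^{\ast}$ share a common vertical contour, which holds whenever $\Re(a),\Re(b),\Re(c)>0$, since then all poles of $\Gamma(n+a),\Gamma(n+b),\Gamma(n+c)$ lie in $\Re(n)<0$ while $1/\Gamma(n+1)$ contributes only zeros, so any line $\Re(n)=\varepsilon>0$ serves. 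What remains to be checked is absolute convergence of the convolution integral, using the exponential decay of $e^{-t}$ as $t\to\infty$, the exponential decay of $G^{2 0}_{1 2}$ at its large-argument end, and the power-law behavior of both factors near the origin; this legitimizes the interchange underlying \eqref{A.11} over the whole region $\Re(a,b,c)>0$. Equivalently, since both sides are analytic in $(a,b,c)$ on $\{\Re(\cdot)>0\}$, one may simply extend the identity from the sub-region $\Re(c)<1$ where the Lemma applies to the full domain by analytic continuation.
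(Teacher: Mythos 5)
Your proposal is correct and follows essentially the same route as the paper: the corollary is exactly the Case~1 construction (Mellin factorization $g^{\ast}(n)=\Gamma(n+\mathbf{c})$, $f^{\ast}(n)=\Gamma(n+\mathbf{a})\Gamma(n+\mathbf{b})/n!$, with $g(t)=t^{\mathbf{c}}e^{-t}$ and $f=G^{2\,0}_{1\,2}$, fed into the Parseval formula \eqref{A.11}), restated under the relabeling you describe. Your added remark that the restriction $\Re(\mathbf{b})<1$ from the Lemma is needed only for the manifestly positive display of $G^{2\,0}_{1\,2}$, and not for the identity itself on $\Re(a,b,c)>0$, is a point the paper leaves implicit and is worth making explicit.
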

\noindent
\textbf{Case 2 discussion:} choosing $f^*(n)$ and $g^*(n)$ as given in \eqref{case1}, it will be identified by \cite[pp.195-196 ]{Fritz}:
\begin{equation}
f(r)=\left(\Gamma(1-\textbf{a})\right)^{-1}r^{\textbf{a}}(1-r)^{-\textbf{a}}\chi_{]0, 1[}(r),\quad 0<\Re(\textbf{a})<1
\end{equation}
and
\begin{equation}
\label{K0}
g(r)=2r^{\frac{1}{2}(\textbf{b}+\textbf{c})}K_{\textbf{b}-\textbf{c}}\left(2\sqrt{r}\right),\quad \Re(\textbf{b},\textbf{c})>0
\end{equation}
where $\chi_{]0, 1[}$ is the indicator function. Clearly $f(r)$ is positive and we can verify easily that $g(r)$ is positive by using the formula \eqref{Mcd}. Applying the property \eqref{A.11}  with $f^*(n)$, $g^*(n)$, $f(r)$ and $g(r)$ under the conditions $0<\Re(\textbf{a})<1$ and $\Re(\textbf{b},\textbf{c})>0$
to the equation \eqref{Eq3.25}, we obtain a second integral representation
\begin{equation}\label{Eq3.32}
G^{3 0}_{1 3} \left( r \  \Bigg\vert \  {1\atop \textbf{c},\textbf{a},\textbf{b}} \right) = \int_{0}^{\infty}g(t)f\left(\frac{r}{t}\right)\frac{1}{t}dt 
\end{equation}
that assure well the positiveness of $G^{3 0}_{1 3}\left(r\big\vert{1\atop \textbf{c},\textbf{a},\textbf{b}} \right)$ for $0<\Re(\textbf{a})<1$ and $\Re(\textbf{b},\textbf{c})>0$. 
\begin{corollary}
An integral representation of the Meijer's function $G^{3 0}_{1 3}$  is given by
\begin{equation}
G^{3 0}_{1 3} \left( r \  \Bigg\vert \  {1\atop a,b,c} \right)=\frac{2r^{\frac{a+c}{2}}}{\Gamma(1-b)}\int_1^{\infty}t^{\frac{a+c}{2}-1}(t-1)^{-b}K_{c-a}\left( 2\sqrt{rt}\right)dt,
\end{equation}
for $0<\Re(b)<1$ and $\Re(a,c)>0$.
\end{corollary}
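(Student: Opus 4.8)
The plan is to turn the Case~2 integral representation into a closed form by inserting the two explicit factors and performing one change of variable; no genuinely new analytic ingredient is needed beyond what the Case~2 discussion already provides. I would start from the convolution identity obtained by applying the generalized Parseval formula \eqref{A.11} to \eqref{Eq3.25} with the Case~2 choice \eqref{case2}, namely
\[
G^{30}_{13}\!\left(r\,\Big\vert\,{1\atop \mathbf{c},\mathbf{a},\mathbf{b}}\right)=\int_0^\infty g(t)\,f\!\left(\tfrac{r}{t}\right)\frac{dt}{t},
\]
where $g(t)=2t^{(\mathbf{b}+\mathbf{c})/2}K_{\mathbf{b}-\mathbf{c}}(2\sqrt{t})$ is the Macdonald factor \eqref{K0} and $f(\rho)=\Gamma(1-\mathbf{a})^{-1}\rho^{\mathbf{a}}(1-\rho)^{-\mathbf{a}}\chi_{]0,1[}(\rho)$ is the indicator-supported factor identified just above it, valid under $0<\Re(\mathbf{a})<1$ and $\Re(\mathbf{b}),\Re(\mathbf{c})>0$.

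I would then exploit the indicator $\chi_{]0,1[}(r/t)$, which confines the effective integration to $t>r$, and simplify the $r$-dependent piece through $(r/t)^{\mathbf{a}}(1-r/t)^{-\mathbf{a}}=r^{\mathbf{a}}(t-r)^{-\mathbf{a}}$, so that the integral reads
\[
\frac{2r^{\mathbf{a}}}{\Gamma(1-\mathbf{a})}\int_r^\infty t^{(\mathbf{b}+\mathbf{c})/2-1}(t-r)^{-\mathbf{a}}K_{\mathbf{b}-\mathbf{c}}(2\sqrt{t})\,dt.
\]
The substitution $t=r\tau$ with $\tau>1$ normalizes the lower limit to $1$ and frees every power of $r$: collecting the exponents $\mathbf{a}+\big((\mathbf{b}+\mathbf{c})/2-1\big)-\mathbf{a}+1=(\mathbf{b}+\mathbf{c})/2$ puts the prefactor into the form $2r^{(\mathbf{b}+\mathbf{c})/2}/\Gamma(1-\mathbf{a})$ and leaves the integrand $\tau^{(\mathbf{b}+\mathbf{c})/2-1}(\tau-1)^{-\mathbf{a}}K_{\mathbf{b}-\mathbf{c}}(2\sqrt{r\tau})$.

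Finally I would match parameters. Because the Case~2 construction delivers the lower indices in the order $(\mathbf{c},\mathbf{a},\mathbf{b})$, the relabeling $\mathbf{a}=b,\ \mathbf{b}=c,\ \mathbf{c}=a$ arranges them as $(a,b,c)$, turns $K_{\mathbf{b}-\mathbf{c}}$ into $K_{c-a}$, the prefactor into $2r^{(a+c)/2}/\Gamma(1-b)$, and the integrand into $\tau^{(a+c)/2-1}(\tau-1)^{-b}K_{c-a}(2\sqrt{r\tau})$; renaming $\tau$ back to $t$ yields exactly the asserted identity, while the hypotheses $0<\Re(\mathbf{a})<1$, $\Re(\mathbf{b}),\Re(\mathbf{c})>0$ transcribe to $0<\Re(b)<1$, $\Re(a),\Re(c)>0$. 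The only step I would treat with care --- and the main, if mild, obstacle --- is justifying the Parseval interchange under precisely these conditions, i.e. checking that $f$ and $g$ really are the inverse Mellin transforms of $f^{\ast}(n)=\Gamma(n+\mathbf{a})/n!$ and $g^{\ast}(n)=\Gamma(n+\mathbf{b})\Gamma(n+\mathbf{c})$ and that their Mellin images share a common strip containing the contour. A direct Beta-integral evaluation $\int_0^1\rho^{\,n-1}f(\rho)\,d\rho=B(n+\mathbf{a},1-\mathbf{a})/\Gamma(1-\mathbf{a})=\Gamma(n+\mathbf{a})/n!$ together with the standard Mellin transform of $K_{\mathbf{b}-\mathbf{c}}$ giving $\int_0^\infty t^{\,n-1}g(t)\,dt=\Gamma(n+\mathbf{b})\Gamma(n+\mathbf{c})$ settles this, confirming that the interchange is legitimate and the representation holds.
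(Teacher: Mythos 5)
Your proposal is correct and follows exactly the route the paper intends: the corollary is the explicit form of the Case~2 convolution identity, obtained by inserting the Beta-type factor $f$ and the Macdonald factor $g$, using the indicator to restrict to $t>r$, and rescaling $t\mapsto rt$; your relabeling of $(\mathbf{c},\mathbf{a},\mathbf{b})$ to $(a,b,c)$ and the resulting conditions $0<\Re(b)<1$, $\Re(a),\Re(c)>0$ match the statement (and your Beta-integral and Macdonald--Mellin checks correctly justify the Parseval step, which the paper leaves implicit). No gap.
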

With these two cases discussion, we have proved the positiveness of the (meijer function)  $G^{3 0}_{1 3}\left(r \big\vert{0\atop 2\gamma-1,\sigma,\sigma}\right)$ that remains some restrictions on the parameters $\gamma$ and $\sigma$, including conditions under which the NLCSs are normalized, that can be summarize in the following tabular: 
\begin{center}
\begin{tabular}{|c|c|c|c|}
	\hline
	Parameters & $f^{\ast}(n)$ & $g^{\ast}(n)$ & conditions on $\gamma$ and $\sigma$ \\ 
	\hline
	$\textbf{a}=\textbf{b}=\sigma +1$ & \multirow{2}{*}{$\frac{\Gamma(n+\sigma +1)\Gamma(n+\sigma +1)}{n!}$} & \multirow{2}{*}{$ \Gamma(n+2\gamma)$} & $0< \gamma $ \\
	$\textbf{c}=2\gamma$ & & &$ -1<\sigma \leq 0$ \\ \hline
	$\textbf{a}=2\gamma$ & \multirow{2}{*}{$\frac{\Gamma(n+2\gamma)\Gamma(n+\sigma +1)}{n!}$} & \multirow{2}{*}{$\Gamma(n+\sigma +1)$} & $0<\gamma $ \\
	$\textbf{b}=\textbf{c}=\sigma +1$ & & & $-1<\sigma \leq 0$ \\ \hline 
	$\textbf{a}=2\gamma$ & \multirow{2}{*}{$\frac{\Gamma(n+2\gamma)}{n!}$} & \multirow{2}{*}{$\Gamma(n+\sigma +1)\Gamma(n+\sigma +1)$} & $0<\gamma \leq 1/2$ \\
	$\textbf{b}=\textbf{c}=\sigma +1 $ & & & $-1<\sigma $ \\ \hline
	$\textbf{a}=\textbf{c}=\sigma +1$ & \multirow{2}{*}{$\frac{\Gamma(n+\sigma +1)}{n!}$} & \multirow{2}{*}{$\Gamma(n+2\gamma )\Gamma(n+\sigma +1)$} & $0<\gamma $ \\
	$\textbf{b}=2\gamma$ & & & $-1<\sigma \leq 0$ \\ \hline
\end{tabular}
\end{center}
\bigskip
Finally, we conclude that the weight function $m_{\gamma,\sigma}(z\bar{z})$ is positive for $(\gamma ,\sigma)\in S_1 \cup S_2$, where 
\begin{eqnarray}
S_1= ]0,+\infty [ \times ]-1,0],\quad \text{and} \quad S_2= ]0,1/2 ] \times ]-1,+\infty[.
\end{eqnarray}
 With this weight function, equation \eqref{3.17} reduces to $\mathcal{O}_{\gamma,\sigma}=\sum_{n=0}^{\infty}\left|\psi_n\right\rangle\left\langle \psi_n\right|=I_{\mathcal{H}}$ since $\left\{\left|\psi_n\right\rangle\right\}$ is an orthonormal basis of the Hilbert space $\mathcal{H}$. We obtain then the following result.

\begin{proposition}
	Let $(\gamma ,\sigma)\in S_1 \cup S_2$, be fixed parameters. Then, the set of NLCSs attached to the sequence $x_n^{\gamma,\sigma}$ defined in \eqref{1.4} satisfy the  resolution of identity 
	\begin{equation}
	\int_{\mathbb{C}}\left\vert z,\gamma,\sigma\right\rangle\ \left\langle z,\gamma,\sigma\right\vert d\vartheta
	_{\gamma,\sigma}(z)=I_{\mathcal{H}},
	\end{equation}
	in terms of a suitable measure given by
	\begin{equation}
	\label{resolutionidentity}
	d\vartheta _{\gamma,\sigma }(z)=\frac{2}{\Gamma(2\gamma)\Gamma^2(\sigma+1)}\ _{2}F _{3}\left( 
	\begin{array}{c}
	1,1 \\ 
	2\gamma,\sigma+1,\sigma+1
	\end{array}%
	\Bigg\vert z\bar{z}\right) G_{13}^{30}\left(z\bar{z}\ \Bigg\vert \ {{%
			0\atop 2\gamma-1,\ \sigma ,\sigma}}\right) d\varrho(z),
	\end{equation}%
where $G_{13}^{30}$ denotes the Meijer's G-function and 
	$d\varrho $ being the Lebesgue measure on $\mathbb{C}$. 
\end{proposition}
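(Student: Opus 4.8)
The plan is to compute the operator $\mathcal{O}_{\gamma,\sigma}=\int_{\mathbb{C}}|z,\gamma,\sigma\rangle\langle z,\gamma,\sigma|\,d\vartheta_{\gamma,\sigma}(z)$ and to show that it equals $I_{\mathcal{H}}$ for the measure claimed. First I would insert the series \eqref{1.4} together with the ansatz $d\vartheta_{\gamma,\sigma}(z)=\mathcal{N}_{\gamma,\sigma}(z\bar z)\,m_{\gamma,\sigma}(z\bar z)\,d\varrho(z)$, so that the normalization factor cancels the two prefactors $(\mathcal{N}_{\gamma,\sigma})^{-1/2}$ and one is left with a double sum over $n,k$. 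Passing to polar coordinates $z=\rho e^{i\theta}$ and using $\int_0^{2\pi}e^{i(k-n)\theta}\,d\theta/(2\pi)=\delta_{nk}$ collapses this sum onto the diagonal, exactly as in \eqref{3.17}; after the substitution $r=\rho^2$, the whole relation reduces to requiring that $m_{\gamma,\sigma}$ solve the moment problem \eqref{Meijer}, namely $\int_0^\infty r^n m_{\gamma,\sigma}(r)\,dr=2(2\gamma)_n(\sigma+1)_n^2/n!$. Recognising the right-hand side as a ratio of Gamma functions, I would match it against the Erd\'{e}lyi Mellin formula \eqref{Meijerfirstintegral} for the parameter choice $m=3$, $l=0$, $p=1$, $q=3$, $a_1=0$, $b_1=2\gamma-1$, $b_2=b_3=\sigma$, $s=n+1$, which singles out the candidate weight $m_{\gamma,\sigma}(r)=\tfrac{2}{\Gamma(2\gamma)\Gamma^2(\sigma+1)}G^{30}_{13}\!\left(r\,\big|\,{0\atop 2\gamma-1,\sigma,\sigma}\right)$ of \eqref{weihtfunction1}, and hence the measure \eqref{resolutionidentity}.

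The one genuinely delicate point, and the sole source of the restriction $(\gamma,\sigma)\in S_1\cup S_2$, is showing that this weight is nonnegative, so that $d\vartheta_{\gamma,\sigma}$ is an honest positive measure rather than a merely formal solution of the moment equations. Following Sixdeniers and Penson \cite{Sixdernier}, I would rewrite $G^{30}_{13}$ through its inverse-Mellin representation \eqref{Eq3.25}, factor the symbol $\Gamma(n+2\gamma)\Gamma^2(n+\sigma+1)/n!$ as a product $f^{\ast}(n)g^{\ast}(n)$, and apply the generalized Parseval formula \eqref{A.11} to express $G^{30}_{13}$ as a Mellin convolution $\int_0^\infty g(t)f(r/t)\,dt/t$ of the corresponding inverse transforms. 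Two groupings are available: in Case~1, \eqref{case1}, the choice $f^{\ast}(n)=\Gamma(n+\mathbf a)\Gamma(n+\mathbf b)/n!$, $g^{\ast}(n)=\Gamma(n+\mathbf c)$ gives $f(r)=G^{20}_{12}(r\,|\,{1\atop\mathbf a,\mathbf b})$, positive by the integral representation \eqref{IRM} provided $0<\Re(\mathbf b)<1$, and $g(r)=r^{\mathbf c}e^{-r}\ge 0$; in Case~2, \eqref{case2}, the choice $f^{\ast}(n)=\Gamma(n+\mathbf a)/n!$, $g^{\ast}(n)=\Gamma(n+\mathbf b)\Gamma(n+\mathbf c)$ gives the positive $f(r)=\Gamma(1-\mathbf a)^{-1}r^{\mathbf a}(1-r)^{-\mathbf a}\chi_{]0,1[}(r)$ with $0<\Re(\mathbf a)<1$ and the Macdonald kernel $g(r)=2r^{(\mathbf b+\mathbf c)/2}K_{\mathbf b-\mathbf c}(2\sqrt r)\ge 0$ of \eqref{K0}. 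Running the multiset $\{2\gamma,\sigma+1,\sigma+1\}$ through the four admissible assignments to $(\mathbf a,\mathbf b,\mathbf c)$ and intersecting the resulting conditions, with the boundary value $\sigma=0$ (where one parameter equals $1$ and $G^{30}_{13}$ degenerates to a positive $G^{20}_{02}$ Macdonald kernel) treated separately, reproduces precisely $S_1\cup S_2=\big(\,]0,\infty[\times]-1,0]\,\big)\cup\big(\,]0,1/2]\times]-1,\infty[\,\big)$, as in the tabulation above.

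Once positivity is secured on $S_1\cup S_2$, the weight $m_{\gamma,\sigma}$ is a legitimate density and the moment identity \eqref{Meijer} forces every diagonal coefficient in \eqref{3.17} to equal $1$. Consequently $\mathcal{O}_{\gamma,\sigma}=\sum_{n=0}^\infty|\psi_n\rangle\langle\psi_n|$, which is exactly $I_{\mathcal{H}}$ because $\{|\psi_n\rangle\}$ is an orthonormal basis of $\mathcal{H}$; substituting the explicit $\mathcal{N}_{\gamma,\sigma}$ from \eqref{1.6} and $m_{\gamma,\sigma}$ from \eqref{weihtfunction1} into the ansatz yields the stated measure \eqref{resolutionidentity}. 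Throughout, the main obstacle is the positivity analysis of the Meijer $G$-function; the algebraic reduction to the moment problem and the final collapse to the identity are routine.
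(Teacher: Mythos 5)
Your proposal is correct and follows essentially the same route as the paper: the same ansatz $d\vartheta_{\gamma,\sigma}=\mathcal{N}_{\gamma,\sigma}\,m_{\gamma,\sigma}\,d\varrho$, reduction to the moment problem \eqref{Meijer}, identification of the weight via the Erd\'elyi--Mellin formula \eqref{Meijerfirstintegral}, and the Sixdeniers--Penson positivity analysis through the two Mellin-convolution factorizations. The only quibble is a wording slip: the four parameter assignments are combined by taking the \emph{union} of the regions they certify (which is how one arrives at $S_1\cup S_2$), not by intersecting them.
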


\begin{remark}
The photon-distribution of the NLCSs $\left|z,\gamma,\sigma\right\rangle$ is given by 
\begin{equation}
P_n(z,\gamma,\sigma)=\left[\ {}_{2}\digamma_{3}\left( 
\begin{array}{c}
1,1 \\ 
2\gamma,\sigma+1,\sigma+1%
\end{array}%
\Bigg\vert z\bar{z}\right)\right]^{-1}\frac{n!}{(2\gamma)_n}\frac{|z|^{2n}}{(\sigma+1)_n^2}.
\end{equation}
The sequences of positive number $E_n:=x_n^{\gamma,\sigma}$ define in \eqref{3.1} is the spectrum of an unknown Hamiltonian.
\end{remark}
\section{$x_n^{\gamma,\sigma}$-NLCSs attached to the pseudoharmonic oscillator}
In this section, we discuss the closed form of the constructed NLCSs given in \eqref{1.3} by choosing the orthonormal basis $\left|\psi_n\right\rangle$ of the Hilbert space $\mathcal{H}$ as the eigenfunctions of the pseudoharmonic oscillator $\Delta_{\alpha,\beta}$. Then, we define the associated Bargmann-type transform and we derive some interesting formulas.
\subsection{The pseudoharmonic oscillator $\Delta_{\alpha,\beta}$}\ \\
An anharmonic potential that can be used to calculate the vibrational energies of a diatomic molecule has the form
\begin{equation}
V_{\varrho,\kappa_{0}}(\xi)=\varrho\left(\frac{\xi}{\kappa_{0}}-\frac{\kappa_{0}}{\xi}
\right)^{2}
\end{equation}
where $\kappa_{0}>0$ denotes the equilibrium bond length which is the distance between the diatomic nuclei, and $\varrho>0$ with $\varrho\kappa_0^{-2}$ represents a constant force. The associated stationary Schr\"{o}dinger equation reads
\begin{equation}
\label{s31}
-\frac{d^{2}}{d\xi^{2}}\psi(\xi)+\varrho\left(\frac{\xi}{\kappa_{0}}-\frac{\kappa_{0}}{\xi}
\right)^{2}\psi(\xi)=\lambda\psi(\xi),
\end{equation}
where $\psi$ satisfies the Dirichlet boundary condition $\psi(0)=0$. It is an exactly solvable equation.
To simplify the notation we introduce the new parameters $\alpha:=\varrho\kappa_{0}^{2}$ and $\beta:=\kappa_{0}^{-1}\sqrt{\varrho}$. Thereby the Hamiltonian in \eqref{s31} takes the form
\begin{equation}
\label{s2}
\Delta_{\alpha,\beta}:=-\frac{d^{2}}{d\xi^{2}}+\beta^{2}\xi^{2}+\frac{\alpha}{\xi^{2}},\ \ \xi\in\mathbb{R}_{+},\ \ \beta,\alpha>0
\end{equation}
called Gol'dman-Krivchenkov Hamiltonian \cite{Nassar}. Its spectrum in the Hilbert space $L^{2}(\mathbb{R}_{+},d\xi)$ reduces to a discrete part consisting of eigenvalues of the form \cite[pp.9-10]{Saad2}:
\begin{equation}
\label{s1}
\lambda_{n}^{\mu,\beta}:=2\beta(2n+\mu),\ \ \mu=\mu (\alpha)=1+\frac{1}{2}\sqrt{1+4\alpha}> \frac{3}{2},\ \ n=0,1,2,...
\end{equation}
and the wavefunctions of the corresponding normalised eigenfunctions are given by
\begin{equation}
\label{ss}
\langle\xi|\psi_{n}^{\mu,\beta}\rangle:=\left(\frac{2\beta^{\mu}n!}{\Gamma(\mu+n)}\right)^{\frac{1}{2}}\xi^{\mu-\frac{1}{2}}e^{-\frac{1}{2}\beta\xi^{2}}L_{n}^{(\mu-1)}(\beta\xi^{2}),\ \ n=0,1,2,...
\end{equation}
in terms of the Laguerre polynomials $L_{n}^{(\alpha)}$. The set of the functions \eqref{ss} constitutes a complete orthonormal basis for the Hilbert space $L^{2}(\mathbb{R}_{+},d\xi)$.

\begin{remark}
	We should note that the eigenvalues together with the eigenfunctions of $\Delta_{\alpha,\beta}$ could also be obtained by using raising and lowering operators through a factorization of this Hamiltonian based on the Lie algebra $SU(1,1)$ commutation relations \cite[pp.3-4]{Nakiev}.
\end{remark}

\subsection{The $x_n^{\gamma,\sigma}$-NLCS attached to $\Delta_{\alpha,\beta}$}\ \\
As announced in section 4, choosing the Hilbert space  $\mathcal{H}=L^2\left(\mathbb{R}_{+},d\xi\right)$ with its orthonormal basis $\left|\psi_n\right\rangle :=\left|\psi^{\mu,\beta}_n\right\rangle$ we discuss in the following proposition the closed form of the $x_n^{\gamma,\sigma}$-NLCS 
\begin{equation}
\label{4.1}
\left|z, \gamma,\sigma\right\rangle_{\mu,\beta}=
\left(\mathcal{N}_{\gamma,\sigma}\left(z\bar{z}\right)\right)^{-1/2}\sum_{n=0}^{\infty}\sqrt{\frac{n!}{(2\gamma)_n}}\frac{\bar{z}^{n}}{(\sigma+1)_n}\left|\psi^{\mu,\beta}_n\right\rangle,\quad z\in\mathbb{C}
\end{equation}
where $\gamma>0$ and $\sigma\leq0; \ \sigma \neq -1, -2, -3,...$ and $\mathcal{N}_{\gamma,\sigma}$ is the normalization factor in \eqref{1.6}.
\begin{proposition}\label{prop4.2.1}
	Let $2\gamma=\mu >\frac{3}{2}$, $\beta>0$ and $\sigma\leq 0$, be fixed parameters. Then, the wavefunctions of the states $\left|z, 2^{-1}\mu,\sigma\right\rangle_{\mu,\beta}$ defined in \eqref{4.1} can be written as 
	\begin{equation}
	\label{5.2}
	\begin{split}
	\left\langle \xi|z, 2^{-1}\mu,\sigma\right\rangle_{\mu,\beta} =\, & \sqrt{\frac{2\beta^{\mu}}{\Gamma(\mu)}}\left[\ _{2}F _{3}\left( 
	\begin{array}{c}
	1,1 \\ 
	\mu,\sigma+1,\sigma+1%
	\end{array}%
	\Bigg\vert z\bar{z}\right)\right]^{-\frac{1}{2}} \\
	& \times  \xi^{\mu-\frac{1}{2}}e^{-\frac{1}{2}\beta\xi^2}F^{{1:0;0}}_{1:0;1}\left( 
	\begin{array}{c}
	1:-;-; \\ 
	\sigma+1;-;\mu;%
	\end{array}%
	\Bigg\vert \bar{z},-\beta \bar{z}\xi^2\right)
	\end{split}
	\end{equation}
	where $F^{{p:q;k}}_{l:m;n}$ is the generalized Kamp\'{e} de F\'{e}riet's hypergeometric function \cite[p.63]{srivastava}.\\
	In particular, for $\sigma=0$, equation \eqref{5.2} reduces to 
	\begin{equation}
	\label{5.3}
	\left\langle \xi|z, 2^{-1}\mu,0\right\rangle_{\mu,\beta}=\sqrt{2\beta}\left(\left(\bar{z}|z|^{-1}\right)^{\mu-1}I_{\mu-1}(2|z|)\right)^{-\frac{1}{2}}\sqrt{\xi}e^{-\frac{1}{2}\beta\xi^2+\bar{z}}J_{\mu-1}\left(2\xi\sqrt{\beta\bar{z}}\right),\,\, z\in\mathbb{C}
	\end{equation}
 where $I_{\mu}(.)$ and $J_{\mu}(.)$ are the modified Bessel functions of the first kind of order $\mu$.
\end{proposition}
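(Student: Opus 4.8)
The plan is to insert the explicit eigenfunctions \eqref{ss} into the defining expansion \eqref{4.1} and then recognize the resulting scalar series as a double hypergeometric function. First I would set $2\gamma=\mu$ in \eqref{4.1} and substitute $\langle\xi|\psi_n^{\mu,\beta}\rangle$. The two $n$-dependent prefactors $\sqrt{n!/(\mu)_n}$ and $(2\beta^\mu n!/\Gamma(\mu+n))^{1/2}$ multiply; using $(\mu)_n=\Gamma(\mu+n)/\Gamma(\mu)$ they collapse to $\sqrt{2\beta^\mu/\Gamma(\mu)}\,(n!/(\mu)_n)$, so the constant $\sqrt{2\beta^\mu/\Gamma(\mu)}$ together with $\xi^{\mu-1/2}e^{-\frac12\beta\xi^2}$ and the normalization $(\mathcal N_{\mu/2,\sigma})^{-1/2}$ factor out of the sum. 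What remains is
\[
S:=\sum_{n=0}^{\infty}\frac{n!}{(\mu)_n}\,\frac{\bar z^{\,n}}{(\sigma+1)_n}\,L_n^{(\mu-1)}(\beta\xi^2),
\]
and the whole problem reduces to identifying $S$.

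Next I would expand each Laguerre polynomial by its finite series $L_n^{(\mu-1)}(x)=\sum_{k=0}^{n}(-1)^k\frac{\Gamma(n+\mu)}{\Gamma(k+\mu)(n-k)!}\frac{x^k}{k!}$ with $x=\beta\xi^2$. The factor $\Gamma(n+\mu)$ cancels against $(\mu)_n$ through $\frac{n!}{(\mu)_n}\Gamma(n+\mu)=n!\,\Gamma(\mu)$. Interchanging the two summations — legitimate because $\lim_{n}x_n^{\gamma,\sigma}=+\infty$ makes \eqref{4.1} an absolutely convergent (entire) series — and reindexing by $n=k+m$ turns $S$ into
\[
S=\sum_{k,m\ge 0}\frac{(k+m)!}{(\mu)_k\,(\sigma+1)_{k+m}}\,\frac{(-\beta\xi^2\bar z)^{\,k}\,\bar z^{\,m}}{k!\,m!}.
\]
Reading off the Pochhammer structure — the coupled numerator parameter $1$ via $(1)_{k+m}=(k+m)!$, the coupled denominator parameter $\sigma+1$ via $(\sigma+1)_{k+m}$, and the single denominator parameter $\mu$ attached through $(\mu)_k$ to the second variable — identifies $S$ with $F^{1:0;0}_{1:0;1}$ evaluated at $(\bar z,-\beta\bar z\xi^2)$. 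Combined with the prefactors this is precisely \eqref{5.2}.

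For the particular case $\sigma=0$ I would specialize the double series: since $(\sigma+1)_{k+m}=(1)_{k+m}=(k+m)!$ cancels the numerator, $S$ factorizes as $\bigl(\sum_m\bar z^{\,m}/m!\bigr)\bigl(\sum_k(-\beta\bar z\xi^2)^k/((\mu)_k k!)\bigr)=e^{\bar z}\,{}_0F_1(;\mu;-\beta\bar z\xi^2)$. Converting the ${}_0F_1$ to a Bessel function through $J_\nu(\zeta)=(\zeta/2)^\nu\Gamma(\nu+1)^{-1}{}_0F_1(;\nu+1;-\zeta^2/4)$ with $\nu=\mu-1$ and $\zeta=2\xi\sqrt{\beta\bar z}$ produces $J_{\mu-1}(2\xi\sqrt{\beta\bar z})$. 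It then remains to insert the closed form \eqref{3.6} of $\mathcal N_{\gamma,0}$, namely $\Gamma(\mu)|z|^{1-\mu}I_{\mu-1}(2|z|)$, and to collect powers: the $\Gamma(\mu)$ factors cancel, the $\beta$ and $\xi$ exponents reduce to $\beta^{1/2}$ and $\xi^{1/2}$, and the $\bar z,|z|$ powers combine into $\bigl((\bar z|z|^{-1})^{\mu-1}\bigr)^{-1/2}$, yielding \eqref{5.3}.

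The step I expect to be the main obstacle is the middle one: bookkeeping the Pochhammer symbols carefully after reindexing so that $S$ is matched to the exact Kampé de Fériet type $F^{1:0;0}_{1:0;1}$ rather than to a neighboring double series, and checking that the interchange of summation is justified. Once the double series is secured, both the general formula \eqref{5.2} and the Bessel reduction \eqref{5.3} follow from routine applications of standard hypergeometric--Bessel identities.
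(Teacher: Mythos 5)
Your proposal is correct and follows the same overall architecture as the paper: factor out $\sqrt{2\beta^{\mu}/\Gamma(\mu)}\,\xi^{\mu-1/2}e^{-\beta\xi^2/2}$ and the normalization, reduce everything to the series $S=\sum_n \frac{n!}{(\mu)_n}\frac{\bar z^n}{(\sigma+1)_n}L_n^{(\mu-1)}(\beta\xi^2)$, identify $S$ with $F^{1:0;0}_{1:0;1}$, and then factorize the double series at $\sigma=0$ to reach the Bessel form. The one point where you diverge is the identification of $S$: the paper rewrites $L_n^{(\mu-1)}$ via the Kummer relation ${}_1F_1(-n;\mu;x)=\frac{n!}{(\mu)_n}L_n^{(\mu-1)}(x)$ and then invokes a tabulated generating formula from Srivastava--Manocha, whereas you expand the Laguerre polynomial explicitly, interchange the sums, and reindex $n=k+m$ to produce the Kamp\'e de F\'eriet double series directly. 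Your route amounts to a self-contained proof of the special case of that generating formula, at the cost of having to justify the interchange of summation (which does hold here, since $(k+m)!/(\sigma+1)_{k+m}$ grows only polynomially and the remaining factors give an entire double series); the paper's route trades that bookkeeping for a citation. Both yield identical conclusions, including the $\sigma=0$ reduction to $e^{\bar z}\,{}_0F_1(-;\mu;-\beta\bar z\xi^2)$ and hence to $J_{\mu-1}$.
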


\begin{proof}[\textbf{Proof.}]
We start by writing the expression of the wave function of states $\left|z;\gamma,\mu\right\rangle_{\mu,\beta}$ according to
definition \eqref{4.1} as
\begin{equation}
\left\langle \xi|z, \gamma,\sigma\right\rangle_{\mu,\beta}=\left(\mathcal{N}_{\gamma,\sigma}\left(z\bar{z}\right)\right)^{-1/2}\sum_{n=0}^{\infty}\sqrt{\frac{n!}{(2\gamma)_n}}\frac{\bar{z}^n}{(\sigma+1)_n} \left\langle \xi|\psi^{\mu,\beta}_n\right\rangle,\ \xi\in\mathbb{R}_{+}.
\end{equation}
Now, putting $2\gamma=\mu$ and using the expression of $\left\langle \xi|\psi^{\mu,\beta}_n\right\rangle$ given in \eqref{ss}, we obtain 
\begin{equation}
\label{4.4}
\left\langle \xi|z, 2^{-1}\mu,\sigma\right\rangle_{\mu,\beta}=\left(\mathcal{N}_{\frac{\mu}{2},\sigma}\left(z\bar{z}\right)\right)^{-1/2}\sqrt{\frac{2\beta^{\mu}}{\Gamma(\mu)}}\xi^{\mu-\frac{1}{2}}e^{-\frac{1}{2}\beta\xi^2} \sum_{n=0}^{\infty}\frac{n!}{\left(\mu\right)_n}\frac{\bar{z}^n}{\left(\sigma+1\right)_n} L_n^{(\mu-1)}\left(\beta\xi^2\right).
\end{equation}
In order to compute the infinite sum in the right hand side of the above equation, we use the relation between the Laguerre polynomials and the Kummer's function \cite[p.240]{Magnus}:
\begin{equation}
_{1}F _{1}\left( 
\begin{array}{c}
-k \\ 
b%
\end{array}%
\Bigg\vert x\right)=\frac{k!}{(b)_k}L_k^{(b-1)}(x).
\end{equation}
Then, with $b=\mu$
\begin{eqnarray}
\label{4.5}
\mathfrak{S}_{\sigma,\mu}(\xi^2)&:=&\sum_{n=0}^{\infty}\frac{n!}{\left(\mu\right)_n}\frac{\bar{z}^n}{\left(\sigma+1\right)_n} L_n^{(\mu-1)}\left(\beta\xi^2\right) \\
	& = & \sum_{n=0}^{\infty}\frac{\left(\bar{z}\right)^n}{\left(\sigma+1\right)_n}\ _{1}F_{1}\left( 
	\begin{array}{c}
	-n \\ 
	\mu%
	\end{array}%
	\Bigg\vert \beta\xi^2\right). \label{Ssu}
\end{eqnarray}
Next, with the help of the generating formula \cite[p.165]{srivastava}:
\begin{equation}
\label{4.8}
\sum_{n=0}^{\infty}\frac{\prod\limits_{j=1}^{r}(c_j)_n}{\prod\limits_{j=1}^{s}(d_j)_n} \ _{p+1}F _{q}\left( 
\begin{array}{c}
-n; (a_p) \\ 
(b_q)%
\end{array}%
\Bigg\vert x\right)\frac{t^n}{n!}=F^{{r:0;p}}_{s:0;q}\left( 
\begin{array}{c}
(c_r):-;(a_p); \\ 
(d_s):-;(b_q);%
\end{array}%
\Bigg\vert t,-xt\right),
\end{equation}
where $F^{{r:0;p}}_{s:0;q}$ is the generalized Kamp\'{e} de F\'{e}riet's hypergeometric function of two variables (see \cite[p.63]{srivastava}), we obtain through the parameters matching $r=s=q=1$, $p=0$, $c_1=1$, $b_1=\mu$, $d_1=\sigma+1$, $t=\bar{z}$ and $x=\beta\xi^2$:  
\begin{equation}
\label{4.11}
\mathfrak{S}_{\sigma,\mu}(\xi^2)=F^{{1:0;0}}_{1:0;1}\left( 
\begin{array}{c}
1:-;-; \\ 
\sigma+1;-;\mu;%
\end{array}%
\Bigg\vert \bar{z},-\beta \bar{z}\xi^2\right),
\end{equation}
which converge for all $|\bar{z}|<\infty$ and $\beta|\bar{z}\xi^2|<\infty$. 
Finally, go back to \eqref{4.4} with \eqref{4.11} and replace the expression of $\mathcal{N}_{\frac{\mu}{2},\sigma}\left(z\bar{z}\right)$ given in \eqref{1.6}, to obtain the announced result in \eqref{5.2}. \\
Now, for the case $\sigma=0$, we have
\begin{eqnarray}
\mathfrak{S}_{0,\mu}(\xi^2) = F^{{0:0;0}}_{0:0;1}\left( 
\begin{array}{c}
\text{-}:\text{-};\text{-}; \\ 
\text{-};\text{-};\mu;
\end{array}
\Bigg\vert \bar{z},-\beta\bar{z}\xi^2\right) &=& \sum_{r,s\geq 0} \frac{1}{(\mu)_s} \frac{\bar{z}^r}{r!} \frac{(-\beta\bar{z}\xi^2)^s}{s!} \nonumber \\
&=& e^{\bar{z}} \sum_{s\geq 0} \frac{1}{(\mu)_s} \frac{(-\beta\bar{z}\xi^2)^s}{s!} \nonumber\\
&=& e^{\bar{z}} \Gamma(\mu) (\xi\sqrt{\beta\bar{z}})^{1-\mu} J_{\mu-1}\left(2\xi\sqrt{\beta\bar{z}}\right) \label{O0u}
\end{eqnarray}
where $J_{\nu}$ is the Bessel function \cite[p.65]{Magnus}. Proceeding like in the general case, we get the announced result in \eqref{5.3}. This ends the proof.
\end{proof}

\subsection{A Bargmann-type transform}
Next, once we have obtained the closed form, we can associate to the $x_n^{\gamma,\sigma}$-NLCSs in \eqref{4.1} the Bargmann-type transform which will make a connection between the Hilbert space $L^2\left(\mathbb{R}_{+},d\xi\right)$ of the physical system and the space of coefficients. For this, let us recall that the reproducing kernel arising from this $x_n^{\gamma,\sigma}$-NLCS with the connection parameters $\gamma=\mu/2$, reads 
\begin{equation}
\mathcal{K}(z,\overline{w})=\ _{2}F_{3}\left( 
\begin{array}{c}
1,1 \\ 
\mu,\sigma+1,\sigma+1%
\end{array}%
\Bigg\vert z\bar{w}\right).
\end{equation}
The corresponding reproducing kernel Hilbert space consisting of functions which are holomorphic in $\mathbb{C}$, denoted by $\mathcal{A}_{\mu,\sigma}(\mathbb{C})$, is a subspace of the larger Hilbert space $L^{2}(\mathbb{C},d\nu_{\mu,\sigma})$ where the measure $d\nu_{\mu,\sigma}\left(z,\bar{z}\right)$ is given by
\begin{equation}
d\nu_{\mu,\sigma}\left(z,\bar{z}\right)=\frac{2}{\Gamma(\mu)\Gamma^{2}\left(\sigma+1\right)}G^{3 0}_{1 3} \left(z\bar{z} \  \Bigg\vert \  {0\atop \mu-1,\sigma,\sigma} \right)d\varrho(z).
\end{equation} 
In view of the resolution of the identity, we see that the map $\mathcal{B}_{\mu,\sigma}:L^{2}(\mathbb{R}_{+},d\xi)\longrightarrow \mathcal{A}_{\mu,\sigma}(\mathbb{C})$ defined by 
\begin{equation}
\label{4.25}
\mathcal{B}_{\mu,\sigma}[\varphi]\left(z\right)=\left(\mathcal{N}_{\frac{\mu}{2},\sigma}\left(z\bar{z}\right)\right)^{1/2}\langle \varphi|z, 2^{-1}\mu,\sigma\rangle_{\mu,\beta}
\end{equation}
is a unitary map, embeding $L^{2}(\mathbb{R}_{+},d\xi)$ into the holomorphic subspace $\mathcal{A}_{\mu,\sigma}(\mathbb{C})\subset L^{2}(\mathbb{C},d\nu_{\mu,\sigma})$. In order to express it as an integral transform we make use of Proposition \ref{prop4.2.1}.
\begin{theorem}
	Let $\mu>3/2$, $-1<\sigma\leq0$ and $\beta>0$, be fixed parameters. Then, the Bargmann-type transform is the unitary map $\mathcal{B}_{\mu,\sigma}:L^{2}(\mathbb{R}_{+},d\xi)\longrightarrow \mathcal{A}_{\mu,\sigma}(\mathbb{C})$ defined by means of \eqref{5.2} as
	\begin{equation}
	\label{5.17}
	\mathcal{B}_{\mu,\sigma}[\varphi]\left(z\right)=\sqrt{\frac{2\beta^{\mu}}{\Gamma(\mu)}}\int_{0}^{\infty}\xi^{\mu-\frac{1}{2}}e^{-\frac{1}{2}\beta\xi^{2}}F^{{1:0;0}}_{1:0;1}\left( 
	\begin{array}{c}
	1:-;-; \\ 
	\sigma+1;-;\mu;%
	\end{array}%
	\Bigg\vert z,-\beta z\xi^2\eta\right)\varphi(\xi)d\xi
	\end{equation}
	for every $z\in \mathbb{C}$.
	For $\sigma=0$, it reduces to
	\begin{equation}
	\label{5.18}
	\mathcal{B}_{\mu,0}[\varphi](z)=\sqrt{2\beta\Gamma(\mu)}z^{\frac{1-\mu}{2}}e^{z}\int_{0}^{\infty}\xi^{\frac{1}{2}}e^{-\frac{1}{2}\beta\xi^2} J_{\mu-1}\left(2\sqrt{\beta z}\xi\right)\varphi(\xi)d\xi.
	\end{equation}
	
\end{theorem}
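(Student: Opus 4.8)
The plan is to derive \eqref{5.17} by direct substitution of the closed form of the wavefunctions furnished by Proposition \ref{prop4.2.1} into the definition \eqref{4.25} of the transform, and then to obtain \eqref{5.18} as the specialization $\sigma=0$. First I would unfold the pairing in \eqref{4.25} in the position representation. Since $\mathcal{B}_{\mu,\sigma}[\varphi]$ must depend holomorphically on $z$ while the coherent state \eqref{4.1} depends on $\bar z$, the relevant analyzing integral is
\begin{equation}
\mathcal{B}_{\mu,\sigma}[\varphi](z)=\left(\mathcal{N}_{\frac{\mu}{2},\sigma}(z\bar z)\right)^{1/2}\int_{0}^{\infty}\overline{\langle\xi|z,2^{-1}\mu,\sigma\rangle_{\mu,\beta}}\;\varphi(\xi)\,d\xi .
\end{equation}
The interchange of the defining series \eqref{4.1} with the $\xi$-integration that is implicit here is legitimate because that series converges in $L^{2}(\mathbb{R}_{+},d\xi)$ for every $z\in\mathbb{C}$ and the inner product is continuous, so termwise integration is valid.

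Next I would insert the explicit expression \eqref{5.2}. The decisive step is a cancellation: the normalization prefactor carried by \eqref{5.2}, namely $\big[\,{}_{2}F_{3}(1,1;\mu,\sigma+1,\sigma+1;z\bar z)\big]^{-1/2}=\left(\mathcal{N}_{\frac{\mu}{2},\sigma}(z\bar z)\right)^{-1/2}$ by \eqref{1.6}, is annihilated exactly by the factor $\left(\mathcal{N}_{\frac{\mu}{2},\sigma}(z\bar z)\right)^{1/2}$ in front of the integral. What remains under the integral sign is the product of the real factors $\sqrt{2\beta^{\mu}/\Gamma(\mu)}$, $\xi^{\mu-\frac12}$ and $e^{-\frac12\beta\xi^{2}}$ with the Kampé de Fériet factor. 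Because that factor has real series coefficients, the single conjugation in the display above merely replaces its argument $\bar z$ by $z$; collecting these ingredients reproduces \eqref{5.17} and makes the holomorphy in $z$ manifest.

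For the specialization $\sigma=0$ I would repeat the computation starting from the simpler closed form \eqref{5.3}. Here the normalization factor is $\left(\mathcal{N}_{\frac{\mu}{2},0}(z\bar z)\right)^{1/2}=\big(\Gamma(\mu)|z|^{1-\mu}I_{\mu-1}(2|z|)\big)^{1/2}$ by \eqref{3.6} with $2\gamma=\mu$, and it multiplies the prefactor $\big((\bar z|z|^{-1})^{\mu-1}I_{\mu-1}(2|z|)\big)^{-1/2}$ appearing in \eqref{5.3}. The modified-Bessel factor $I_{\mu-1}(2|z|)$ and the moduli cancel, leaving $\sqrt{\Gamma(\mu)}\,\bar z^{(1-\mu)/2}$, which conjugates to $\sqrt{\Gamma(\mu)}\,z^{(1-\mu)/2}$; absorbing the remaining constant $\sqrt{2\beta}$ and the exponential $e^{\bar z}\mapsto e^{z}$ produces the prefactor $\sqrt{2\beta\Gamma(\mu)}\,z^{(1-\mu)/2}e^{z}$ of \eqref{5.18}, while the kernel collapses to $J_{\mu-1}(2\sqrt{\beta z}\,\xi)$.

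The unitarity of $\mathcal{B}_{\mu,\sigma}$ onto the reproducing-kernel space $\mathcal{A}_{\mu,\sigma}(\mathbb{C})$ I would not reprove, since it is already secured by the resolution of identity established in Section 3.2 and recorded immediately before the statement; only the integral representations \eqref{5.17} and \eqref{5.18} remain to be verified. I expect no genuine analytic obstacle here. The two points demanding care are the bookkeeping of the normalization cancellation together with the conjugation that turns $\bar z$ into $z$, and, in the case $\sigma=0$, the algebraic simplification of the competing $|z|$- and Bessel-factors traced above.
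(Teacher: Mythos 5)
Your proposal is correct and follows essentially the same route as the paper, which gives no separate proof but obtains \eqref{5.17} and \eqref{5.18} precisely by inserting the closed forms \eqref{5.2} and \eqref{5.3} of Proposition \ref{prop4.2.1} into the definition \eqref{4.25} and cancelling the normalization factors. Your explicit tracking of the conjugation that converts $\bar{z}$ into $z$ (and your silent correction of the stray $\eta$ in \eqref{5.17}, which is a typo) is, if anything, more careful than the paper's own wording of \eqref{4.25}.
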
 
With the help of the resolution of the identity \eqref{3.11} and the transform \eqref{4.25}, we can represent any arbitrary state $|\varphi\rangle$ in $L^{2}(\mathbb{R}_{+},d\xi)$ in terms of the NLCSs \eqref{4.1} as follows:
\begin{equation}
|\varphi\rangle=\int_{\mathbb{C}}d\nu_{\mu,\sigma}(z)\left(\mathcal{N}_{\frac{\mu}{2},\sigma}\left(z\bar{z}\right)\right)^{-\frac{1}{2}}\mathcal{B}_{\mu,\sigma}[\varphi](z)|z, 2^{-1}\mu,\sigma\rangle_{\mu,\beta}.
\end{equation}
Therefore, the norm square of $|\varphi\rangle$ also reads
\begin{equation}
\langle\varphi|\varphi\rangle_{\mathcal{H}}=\frac{2}{\Gamma(\mu)\Gamma^{2}(\sigma+1)}\int_{\mathbb{C}}|\mathcal{B}_{\mu,\sigma}[\phi](z)|^{2}G_{13}^{30}\left(|z|^2\ \Bigg\vert \ {{%
		0\atop \mu-1,\ \sigma,\sigma}}\right)d\varrho(z).
\end{equation}
The carefull reader has certaintly recognized in \eqref{5.18} the Bessel transform \cite{Fritz}. As a consequence we have the following result.
\begin{corollary}
	Let $\mu>3/2$,  $-1<\sigma\leq0$ and $\beta>0$, be fixed parameters. The following integral 
	\begin{equation}
	\label{5.19}
	\int_{0}^{\infty}\xi^{2\mu-1}e^{-\beta\xi^{2}}F^{{1:0;0}}_{1:0;1}\left( 
	\begin{array}{c}
	1:-;-; \\ 
	\sigma+1;-;\mu;%
	\end{array}%
	\Bigg\vert z,-\beta z\xi^2\right)L_n^{(\mu-1)}(\beta\xi^2)d\xi=\frac{\Gamma(\mu)}{2\beta^{\mu}} \frac{z^n}{\left(\sigma+1\right)_n}
	\end{equation}
	holds true for every $z\in\mathbb{C}$. When $\sigma=0$, it reduces to 
	\begin{equation}
	\label{5.20}
	\int_{0}^{\infty}\xi^{\mu}e^{-\beta\xi^2} J_{\mu-1}\left(2\sqrt{\beta z}\xi\right)L_n^{(\mu-1)}(\beta\xi^2)d\xi=\frac{z^{n+\frac{\mu-1}{2}}}{n!}\frac{e^{-z}}{2\beta^{\frac{\mu+1}{2}}}.
	\end{equation}
	
\end{corollary}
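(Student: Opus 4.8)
The plan is to recognize the left-hand side of \eqref{5.19} as nothing more than the Bargmann-type transform \eqref{5.17} applied to a single eigenfunction of $\Delta_{\alpha,\beta}$, and to evaluate that transform in two independent ways. Indeed, taking $\varphi=\psi_n^{\mu,\beta}$ in \eqref{5.17} and inserting the explicit wavefunction \eqref{ss} (with $2\gamma=\mu$), the two factors $\xi^{\mu-\frac12}$ and the two Gaussians combine to give the integrand $\xi^{2\mu-1}e^{-\beta\xi^2}$ multiplied by the Kamp\'e de F\'eriet kernel and $L_n^{(\mu-1)}(\beta\xi^2)$. Thus, up to explicit $\Gamma$-factors, the integral appearing in \eqref{5.19} is exactly $\mathcal{B}_{\mu,\sigma}[\psi_n^{\mu,\beta}](z)$, and it suffices to compute this value intrinsically.

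The intrinsic evaluation uses the definition \eqref{4.25} together with the defining superposition \eqref{4.1}. Writing $\mathcal{B}_{\mu,\sigma}[\psi_n^{\mu,\beta}](z)=(\mathcal{N}_{\frac{\mu}{2},\sigma}(z\bar z))^{1/2}\langle\psi_n^{\mu,\beta}|z,2^{-1}\mu,\sigma\rangle_{\mu,\beta}$ and using the orthonormality $\langle\psi_n^{\mu,\beta}|\psi_k^{\mu,\beta}\rangle=\delta_{nk}$, the infinite sum in \eqref{4.1} collapses to its $n$-th term and the normalization factor cancels, leaving $\mathcal{B}_{\mu,\sigma}[\psi_n^{\mu,\beta}](z)=\sqrt{n!/(\mu)_n}\,z^n/(\sigma+1)_n$. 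The switch from $\bar z$ in \eqref{4.1} to the holomorphic $z$ here is harmless because each $\psi_n^{\mu,\beta}$ is real-valued, so the conjugation built into the kernel of \eqref{5.17} acts only on the variable $z$; this reconciles the series computation with the holomorphic integral kernel.

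Equating this closed form with the integral expression and solving for the integral then yields \eqref{5.19}: the prefactor $\sqrt{2\beta^\mu/\Gamma(\mu)}$ from \eqref{5.17} combines with the normalization $(2\beta^\mu n!/\Gamma(\mu+n))^{1/2}$ from \eqref{ss}, and after rewriting $(\mu)_n=\Gamma(\mu+n)/\Gamma(\mu)$ all the $n$-dependent $\Gamma$'s together with $\sqrt{n!}$ cancel, leaving precisely $\Gamma(\mu)/(2\beta^\mu)\cdot z^n/(\sigma+1)_n$ on the right. The special case \eqref{5.20} is obtained identically, now starting from the reduced transform \eqref{5.18} and using $(\sigma+1)_n=(1)_n=n!$ at $\sigma=0$; here the intrinsic value is $z^n/\sqrt{n!(\mu)_n}$, and the same cancellation of constants, together with the explicit factor $\sqrt{2\beta\Gamma(\mu)}\,z^{(1-\mu)/2}e^{z}$ in \eqref{5.18}, produces the stated $z^{\,n+\frac{\mu-1}{2}}e^{-z}/(n!\,2\beta^{(\mu+1)/2})$.

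The only genuine work lies in the bookkeeping of the $\Gamma$- and Pochhammer factors in the final step and in keeping the conjugation convention consistent so that the holomorphic monomial $z^n$ (and not $\bar z^n$) emerges. Once the identification of the integral with $\mathcal{B}_{\mu,\sigma}[\psi_n^{\mu,\beta}]$ is made, the identities \eqref{5.19}--\eqref{5.20} follow essentially for free from the unitarity of the Bargmann-type transform, with no term-by-term integration of the Kamp\'e de F\'eriet series required.
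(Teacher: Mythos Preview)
Your proposal is correct and follows essentially the same route as the paper: compute $\mathcal{B}_{\mu,\sigma}[\psi_n^{\mu,\beta}](z)$ in two ways---once via the integral kernel \eqref{5.17} (or \eqref{5.18}) applied to the explicit eigenfunction \eqref{ss}, and once intrinsically from the definition \eqref{4.25} together with the orthonormality of the $\psi_n^{\mu,\beta}$---and then equate. Your additional remarks about the $z$ versus $\bar z$ conjugation convention and the bookkeeping of the Pochhammer factors are accurate and only make explicit what the paper leaves implicit.
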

\begin{proof}
According to \eqref{4.25}, the range of the basis vector $\left\{\psi_n^{\mu,\beta}\right\}$ under the transform $\mathcal{B}_{\mu,\sigma}$ should exactly be the coefficients in \eqref{4.1} with the parameter $2\gamma=\mu$. More precisely  
\begin{equation}
\label{5.21}
\mathcal{B}_{\mu,\sigma}[\psi_n^{\mu,\beta}](z)=\sqrt{\frac{n!}{(\mu)_n}}\frac{z^n}{\left(\sigma+1\right)_n}.
\end{equation}
On the other hand, applying the transform \eqref{5.17} to $\psi_n^{\mu,\beta}$ one obtains
\begin{equation}
\label{BPsi2}
\mathcal{B}_{\mu,\sigma}[\psi_n^{\mu,\beta}](z) = \frac{2\beta^{\mu}}{\Gamma(\mu)}\sqrt{\frac{n!}{(\mu)_n}}
\int_{0}^{\infty}\xi^{2\mu-1}e^{-\beta\xi^{2}}F^{{1:0;0}}_{1:0;1}\left( 
\begin{array}{c}
1:-;-; \\ 
\sigma+1;-;\mu;%
\end{array}%
\Bigg\vert z,-\beta z\xi^2\right)L_n^{(\mu-1)}(\beta\xi^2)d\xi.
\end{equation}
Then, matching \eqref{5.21} and \eqref{BPsi2} results to the announced formula \eqref{5.19}. Now, taking $\sigma=0$ in \eqref{5.19} it follows the formula \eqref{5.20} using \eqref{O0u}. This ends the proof.
\end{proof}
\begin{remark}
Note that \eqref{5.20} can be obtained by using the formula \cite[p.65]{Fritz}:
\begin{equation}
\begin{split}
\int_{0}^{\infty}x^{\frac{1}{2}+\nu}\exp\left(-a^2x^2\right)L^{(\nu)}_n(a^2x^2)(xy)^{1/2}J_{\nu}(xy)dx = \, \frac{y^{2n+\nu+1/2}}{n!2^{2n+\nu+1}} \frac{\exp\left(-\frac{y^2}{4a^2}\right)}{a^{2(\nu +n+1)}},
\end{split}
\end{equation}
with $\nu>-1$, for parameters $x=\xi$, $y=2\sqrt{\beta z}$, $a=\sqrt{\beta}$ and $\nu=\mu-1$. 
\end{remark}

\section{Conclusion}
 In the present paper, we have constructed a new family of nonlinear coherent states (NLCSs) by replacing the factorial $n!$ occurring in the coefficients of the canonical CSs by a specific generalized factorial $x_{n}^{\gamma,\sigma }!=x_{1}^{\gamma,\sigma }\cdots x_{n}^{\gamma,\sigma }$ with $x_{0}^{\gamma,\sigma }=0$, where $x_{n}^{\gamma,\sigma }$ is a sequence of positive numbers. We have showed, for $(\gamma,\sigma)\in \left(]0,+\infty [ \times ]-1,0]\right) \cup \left(]0,1/2 ] \times ]-1,+\infty ]\right)$, that these states satisfy a resolution of identity relation in the Fock space and that the obtained NLCSs includes as special cases both the Barut-Giraredello CSs and Philophase states. The new coefficients are then used to consider a superposition of eigenstates of the pseudoharmonic oscillator $\Delta_{\alpha ,\beta}$. The obtained states constitute a four-parameters family of NLCSs under the same restrictions on the parameters $\gamma$ and $\sigma$. For equal parameters $\gamma =2^{-1}\mu(\alpha) $, we define the associated Bargmann-type transform and we derive some integral formulas. 
\begin{appendix}
\section{Proof of the lemma 1}
We use the formula \cite[p.62]{Mathai}
\begin{equation}
G^{2 0}_{1 2} \left( z \  \Bigg\vert \  {\alpha\atop \beta,\lambda} \right)=z^{\frac{\beta+\lambda-1}{2}}e^{-\frac{z}{2}}W_{\frac{1+\beta+\lambda}{2}-\alpha,\frac{\beta-\lambda}{2}}(z)
\end{equation}
where $W_{\nu,\mu}$ is the Whittaker function and the integral representation of $W_{\nu,\mu}$
\cite[p.1025]{Gradstyn}
\begin{equation}
W_{\nu,\mu}(z)=\frac{z^{\mu+\frac{1}{2}}e^{-\frac{z}{2}}}{\Gamma(\mu-\nu+\frac{1}{2})}\int_{0}^{\infty}e^{-s z}s^{\mu-\nu-\frac{1}{2}}\left(1+s\right)^{\mu+\nu-\frac{1}{2}}ds
\end{equation}
under the conditions $|\arg z|<\frac{\pi}{2}$ and $\Re(\mu-\nu)>-\frac{1}{2}$ for $\nu=\frac{1+\beta +\lambda}{2}-\alpha$ and $\mu=\frac{\beta-\lambda}{2}$ to obtain the integral representation of the Meijer's function $G^{2 0}_{1 2}$ 
\begin{equation}
G^{2 0}_{1 2} \left( z \  \Bigg\vert \  {\alpha\atop \beta,\lambda} \right)=\frac{z^{\beta}e^{-z}}{\Gamma(\alpha-\lambda)} \int_{0}^{\infty}e^{-s z}s^{\alpha-\lambda-1}\left(1+s\right)^{\beta-\alpha}ds,
\end{equation}
for $|\arg z|<\frac{\pi}{2}$ and $\Re(\alpha-\lambda)>0$.

	\end{appendix}


\begin{thebibliography}{4}
\begin{footnotesize}
\bibitem{Schrodinger} E. Schr\"{o}dinger, Der stetige \"{u}bergang von der mikro-zur makromechanik. \textit{Naturwissenschaften}, \textbf{14} (1926) 664-666 .
\bibitem{Shanta} P. Shanta, S. Chaturvdi, V. Srinivasan and R. Jagannathan, Unified approach to the analogues of single-photon and multiphoton coherent states for generalized bosonic oscillators, \textit{J. Phys. A: Math. Gen.} \textbf{27} (1994) 6433 

\bibitem{Matos} R. L. de Matos Filho and W. Vogel, Nonlinear coherent states, \textit{Phys. Rev. A} \textbf{54} (1996) 4560 

\bibitem{Manko} V. I. Man'ko, G. Marmo, E. C. G. Sudarshan and F. Zaccaria, f-oscillators and non-linear coherent states, \textit{Phys. Scr.} \textbf{55} (1997) 528 

\bibitem{RT} R. Roknizadeh and M. K. Tavassoly, The construction of some important classes of generalized coherent states: the nonlinear coherent states method,  \textit{J. Phys. A: Math. Gen.} \textbf{37} (2004) 8111 

\bibitem{Barut} A. O. Barut and L. Girardello, New coherent states associated with
Non-compact groups, \textit{commun. mat. phys.} \textbf{21}, 1971.

\bibitem{Brif} C. Brif, Photon states associated with Holstein-Primakoff
realization of $SU(1,1)$ Lie algebra, \textit{Quantum Semiclass. Opt.} \textbf{7} 803, 1995.

\bibitem{KKM} K. Ahbli, P. Kayupe Kikiodio and Z. Mouayn, Orthogonal
polynomials attached to coherent states for the symmetric P\"{o}schl-Teller
oscillator. \textit{Integral Transforms Spec Funct.} 2016; 27(10):806-823

\bibitem{AI} S. T. Ali and M. E. H. Ismail, Some orthogonal polynomials arising from
coherent states, \emph{J.Phys A: Math. Theor}, \textbf{45} (2012)

\bibitem{Godman} I. I. Gol'dman and D. V. Krivchenkov, Problems in Quantum Mechanics (London: Pergamon; 1961)

\bibitem{Mojaveri2018} B. Mojaveri, A. Dehghani and R. J. Bahrbeig, Nonlinear coherent states of the para-Bose oscillator and their non-classical features, \textit{Eur. Phys. J. Plus}, (2018) \textbf{133}
: 529
\bibitem{AAG} Ali ST, Antoine JP and Gazeau JP. Coherent States, Wavelets, and
their Generalizations (New york: Springer Science + Busness Media ; 1999,
2014)

\bibitem{Iwata} G. Iwata, Non-Hermitian operators and eigenfunction expansions \emph{Prog. Theor. Phys.} \textbf{6}, 1951.

\bibitem{Watson} G. N. Watson, A treatise on the theory of Bessel Functions (Cambridge: Sc. D., F. R. S; 1944)

\bibitem{srivastava} H. M. Srivastava and H. L. Manocha, A Treatise on Generating Functions (London: Ellis Horwood Limited; 1984)

\bibitem{Appl} T. Appl and D. H. Schiller, Generalized hypergeometric coherent states, \textit{J. Phys. A: Math. Gen.} \textbf{37} (2004) 2731

\bibitem{Popov} D. Popov and M. Popov, Some operational properties of the generalized hypergeometric coherent states, \textit{Phys. Scr.} \textbf{90} (2015) 035101

\bibitem{Erdelyi1954} A. Erdelyi et al., Tables of Integral Transforms, vol.1, (New York: McGraw-Hill; 1954)

\bibitem{Sixdernier} J.M. Sixdeniers, and K.A. Penson. On the completeness of coherent states generated by
binomial distribution. \textit{J. Phys. A: Math. Gen.} \textbf{33} (2000) 2907-2916 

\bibitem{Marichev} O. I. Marichev, Handbook of Integral Transforms of Higher Transcendental Functions: Theory and Algorithmic Tables (Chichester: Ellis Horwood; 1983)

\bibitem{Mathai} A. M. Mathai and R. K. Saxena , Genaralized Hypergeometric function
with application in statistics and physical sciences (Heidelberg: Springer-Verlag Berlin; 1973)

\bibitem{Fritz} F. Oberhettinger, Tables of Mellin transforms (Hildelberg: Springer-Verlag Berlin; 1974)

\bibitem{Nassar} R. L. Hall, N. Saad and A. B. Von Keviczky, Closed-form sums for some perturbation series involving associated Laguerre polynomials \textit{J. Phys. A: Math. Gen.} \textbf{34} (2001) 11287-300

\bibitem{Saad2} R. L. Hall, N. Saad and A. B. Von Keviczky, Spiked harmonic oscillators \textit{arXiv:math-ph/0109014v1}, 2001.

\bibitem{Nakiev} S. M. Nagiev, E. I. Jafarov and R. M. Imanov, On a dynamical symmetry of the relativistic linear singular oscillator, \textit{arXiv:math-ph/0608057v2}, 2006.

\bibitem{Magnus} W. Magnus, F. Oberhettinger and R. P. Soni, Formulas and Theorems for the Special Functions of Mathematical Physics (Berlin: Springer; 1966)

\bibitem{Gradstyn} I. S. Gradshteyn and I. M. Ryzhik, Table of integrales, series and products (Elsevier Inc.; 2007)
\end{footnotesize}
\bigskip
\end{thebibliography}
\end{document}